\documentclass[runningheads,11pt]{llncs}
\usepackage[left = 1in, right = 1in]{geometry}

\usepackage{algorithm}
\usepackage{algpseudocode}
\usepackage{amsmath, amsfonts}
\usepackage[english]{babel}
\usepackage{booktabs}
\usepackage{bbm}
\usepackage{color}
\usepackage{dsfont}
\usepackage{enumitem}
\usepackage{fancyhdr}
\usepackage{float}
\usepackage[T1]{fontenc}    
\usepackage[colorlinks=true,linkcolor=blue,citecolor=blue]{hyperref}
\usepackage[capitalize,noabbrev]{cleveref}
\usepackage[utf8]{inputenx}
\usepackage{graphicx}
\usepackage{mathrsfs}
\usepackage{mathtools}
\usepackage{multirow}
\usepackage{nicefrac}       
\usepackage{listings}

\usepackage{sidecap}
\usepackage{subcaption}
\usepackage{textcomp}
\usepackage[textwidth=30mm]{todonotes}
\usepackage{url}       
\usepackage{ulem}

\allowdisplaybreaks

\DeclareMathOperator{\Expect}{\mathbb{E}}
\newcommand{\esp}[1]{\Expect\left[#1\right]}
\newcommand{\prob}[1]{\mathbb P\left(#1\right)}

\newcommand{\dm}{\mathrm{DP}}

\newcommand{\va}{\mathrm{VA}}
\newcommand{\vadm}{\mathrm{VA\text -DM}}

\newcommand{\crA}{\alpha} 
\newcommand{\crB}{\beta} 
\newcommand{\crC}{\gamma}
\newcommand{\classdp}{\mathrm{ON}}
\newcommand{\opt}{\mathrm{OPT}}
\newcommand{\fk}{\opt}
\newcommand{\piua}{\mathrm{PIDUA}}
\newcommand{\pida}{\mathrm{PIDUA}}
\newcommand{\cpi}{\mathrm{PI}}

\newcommand{\Inst}{\mathcal I}
\newcommand{\ber}{\mathrm{Ber}}

\usepackage[T1]{fontenc}
\spnewtheorem{result}[theorem]{Informal Result}{\bfseries}{\itshape}

\begin{document}

\title{Prophet Inequalities with Delayed and Uncertain Acceptance}
%
%
\author{Emile Martinez$^{1}$ \and Felipe Garrido-Lucero$^{1}$ \and Umberto Grandi$^{1}$ \and Sebastián Pérez-Salazar$^{2}$}
\institute{IRIT, Université Toulouse Capitole, Toulouse, France \and 
Rice University, Houston, Texas, USA}
\authorrunning{E. Martinez, F. Garrido-Lucero, U. Grandi, S. Pérez-Salazar}
%
%
\maketitle              
\begin{abstract}
We introduce the prophet inequality with delayed and uncertain acceptance, a variant of the classical prophet inequality in which a decision-maker sequentially evaluates options whose acceptance is uncertain and whose outcome is revealed only after a fixed delay. That is, at each time step, the decision-maker observes the realized value of the arriving option and must irrevocably decide whether to attempt to select it or to continue searching. If an option is attempted to be selected, the process is suspended for a fixed delay $d$, during which no other options can be considered. Once the delay expires, the selection succeeds with a known probability. If successful, the decision-maker receives the realized value and the process terminates; otherwise, the search resumes.

In addition to the online decision-maker, we consider two stronger benchmarks: the value-aware decision-maker, who knows all value realizations in advance but not the acceptance outcomes, and the prophet, who knows both the values and the acceptance realizations. We characterize the competitive ratios between the two decision-makers and the prophet, showing that each is lower bounded by $1/(d+2)$, and we construct instances demonstrating that these bounds are tight for two of the comparisons.

In the extreme case of no delay ($d=0$), where our result recovers the classical $1/2$-competitive guarantee, we establish the tightness of the remaining competitive ratio and identify sufficient conditions under which the value-aware decision-maker can beat the $1/2$ barrier against the prophet. In particular, we show that this occurs whenever all acceptance probabilities are strictly positive, by reducing the problem to a classical prophet inequality instance over appropriately scaled Bernoulli random variables.

\keywords{Prophet Inequalities \and Delay \and Acceptance.}
\end{abstract}
\newpage 
\setcounter{page}{1}
\section{Introduction}

Consider a student searching for a flat in a highly competitive rental market. Flats are visited sequentially, and hesitation is a luxury she cannot afford: after each visit, the student must make an immediate and irreversible decision about whether to apply. Although advertisements may provide prior information about an apartment’s quality, its true value is revealed only upon inspection. Passing on an apartment means losing it forever, while accepting it terminates the search. The student thus faces a fundamental tension induced by the online nature of the problem, namely the trade-off between accepting a currently observed option and forgoing it in the hope of encountering a better apartment in the future. The online selection literature has devoted substantial effort to understanding and alleviating this tension across a wide range of applications, including classical online selection problems \cite{dynkin1963optimum}, $k$-selection \cite{correa2021optimal}, online knapsack \cite{chakrabarty2008online}, and online matching \cite{mehta2010online}.

One of the simplest and most influential models capturing this tension is the \textit{prophet inequality $\mathrm{(}\cpi\mathrm{)}$ problem} introduced by Krengel and Sucheston~\cite{krengel1977semiamarts}. In a $\cpi$ problem, a decision-maker (the student) observes a sequence of independent, nonnegative random variables $X_1,\ldots,X_n$, representing apartment qualities, each drawn from a known distribution. At each step $i$, upon observing $X_i$, the decision-maker must irrevocably choose whether to accept the value---thereby terminating the process with payoff $X_i$---or to reject it and continue, without the possibility of returning to previously rejected options. The goal in this kind of problems is to design an algorithm that maximizes the expected payoff. The performance of such an algorithm is measured by its \textit{competitive ratio}, defined as the worst-case ratio between the algorithm’s expected payoff 
and the expected maximum value achievable in hindsight $\esp{\max_{i \in [n]} X_i}$, 
often referred to as the value of a \textit{prophet}. A classical result shows that there exists an algorithm achieving a competitive ratio of at least $1/2$, and that this guarantee is tight (see e.g.~\cite{kleinberg2012matroid,krengel1977semiamarts,samuel1984comparison}).

While the online trade-off faced by the decision-maker in the classic setting, where acceptance is immediate and certain, is well understood, 
competitive rental markets introduce new trade-offs for the decision-maker. In particular, submitting an application does not guarantee immediate or certain acceptance: landlords often process multiple applications simultaneously, so applicants typically face delays before receiving either an acceptance or rejection decision (see, e.g.,~\cite{regnaud2025rental}). In our introductory example, an application may be rejected with positive probability, forcing the student (the decision-maker in our model) to continue searching. This gives rise to two new trade-offs. First, an apartment of modest quality but with a high probability of acceptance may be preferable to a more attractive apartment that is unlikely to accept the student's application. Second, because acceptance decisions are delayed, applying to one apartment temporarily prevents the decision-maker from applying to others, which may disappear from the market while waiting for a response. Motivated by these observations, we introduce the \textit{prophet inequalities with delayed and uncertain acceptance} ($\pida$), which extends the prophet inequality framework to account for both the possibility of rejection and the delayed response to applications. Informally, at time $i$, after observing $X_i$, the decision-maker may either continue searching or attempt to select the current option. If the latter is chosen, the selection succeeds with a given probability after a fixed and known delay $d \in \mathbb{N}$; otherwise, it is rejected, and the search resumes at time $i+d+1$. During this waiting period, no other options can be considered, naturally capturing situations in which a student may decide not to invest additional effort in visiting apartments after submitting an application, or where applications are binding, preventing the student from withdrawing an accepted offer. A formal description of the model is provided in the next section.


Our model extends beyond apartment search. For instance, research article submissions involve uncertain acceptance outcomes, during which authors often wait for editorial decisions before submitting elsewhere; kidney transplantation faces uncertainty due to potential compatibility issues that may only become apparent during surgery; and in labor markets, workers may undergo probationary periods during which they typically refrain from pursuing alternative employment opportunities.

Interestingly, due to the extra source of uncertainty, $\pida$ induces an intermediate agent between the decision-maker and the prophet, that is, an agent that is aware of the values but is blind to acceptance: the \textit{value-aware-decision-maker} ($\vadm$), which can be interpreted in our leading example as a real estate expert with a perfect valuation of flats but which remains uncertain of the contingency of being rejected. The existence of this intermediate agent raises additional research questions: \textit{How well the $\vadm$ performs in general? How much stronger is the $\vadm$ than the decision-maker? How weak is the $\vadm$ relative to the prophet? Is there an actual separation between the decision-maker and the $\vadm$ or knowing the values in advance actually does not help?}. In order to answer these questions, in this article we will study the three natural competitive ratios arising from the comparison of three agents by establishing worst-case guarantees.

\subsection{Contributions and Techniques}


Our first result characterizes the worst-case competitive ratio of the two online agents with respect to the prophet.

\begin{result}[Decision-maker and $\vadm$ versus prophet]
For any $\pida$ instance with delay $d$, the worst-case competitive ratios of both the decision-maker and the $\vadm$ against the prophet are equal to $\frac{1}{d+2}$.
\end{result}

The proof naturally splits into two cases: positive delay ($d>0$) and no delay ($d=0$). The positive-delay case relies on a sequence of instance transformations that gradually reduce any instance to one containing only $d+2$ random vectors, in the spirit of the techniques of Hill and Kertz \cite{hill1981ratio} for the classical PI. The no-delay case, in contrast, follows from the following reduction lemma.

\begin{result}[Reduction lemma]
For any $\pida$ instance with no delay $(d=0)$, knowing in advance whether an application will be accepted at the moment of choosing provides no additional advantage to the decision-maker. In particular, every $\pida$ instance without delay can be reduced to a classical prophet inequality instance whose observed random variables are given by the product of the realized values and their acceptance indicators.
\end{result}

The reduction lemma reveals a fundamental distinction between the delayed and no-delay settings. Without delay, a rejected application incurs no opportunity cost: after observing that an application has failed, the decision-maker can immediately continue the search. By contrast, when $d>0$, submitting an application blocks the decision-maker for several rounds, regardless of whether the application is eventually accepted. This waiting period is precisely what enables the instance transformations used in the proof of the positive-delay case, as random vectors can be added during the blocked interval without affecting the decision-maker's expected utility.

The reduction lemma also allows classical prophet inequality results to be transferred directly to our setting when $d=0$. For instance, \Cref{prop:thresholds_are_optimal} extends the classical worst-case optimality of single-threshold algorithms to our model. More importantly, it enables us to establish the tightness of the remaining competitive ratio, namely that between the decision-maker and the $\vadm$, showing that all three worst-case competitive ratios are equal to $1/2$. This leads to an intriguing observation. In the worst case, the decision-maker guarantees half the utility of the $\vadm$, who in turn guarantees half the utility of the prophet. Yet, the decision-maker also guarantees half the utility of the prophet directly. This naturally raises the question of whether knowing the realized values in advance provides any worst-case advantage. As we show, the answer is subtle. Even when acceptance probabilities are arbitrarily small, the $\vadm$ may still perform arbitrarily close to the prophet. Nevertheless, much stronger insights can be obtained when acceptance probabilities are bounded away from zero, as illustrated by the following result.

\begin{result}[$\vadm$ versus prophet under no delay]
Whenever the probability of acceptance of every variable is lower bounded by $p \in [0,1]$, the $\vadm$ achieves at least $\frac{1}{2-p}$ of the value of the prophet.
\end{result}

To prove the previous result, we establish a larger result for the classical PI case over scaled Bernoulli random variables (\Cref{thm:CR_bernoulli}), which we prove by building a sequence of instance modifications allowing to reduce any given instance to a linear program whose optimal value characterizes the analyzed worst-case competitive ratio.

\subsection{Related Literature}

\noindent\textbf{Prophet inequalities.} Prophet inequalities were introduced by Krengel and Sucheston~\cite{krengel1977semiamarts}. Samuel-Cahn later showed that a competitive ratio of $1/2$ can be achieved by a single-threshold algorithm, and that this guarantee is optimal~\cite{samuel1984comparison}. More recently, prophet inequalities have attracted considerable attention in mechanism design due to their close connection with posted-price mechanisms, where reservation prices naturally translate into threshold-based policies (see, e.g.,~\cite{chawla2010multi,correa2019pricing,hajiaghayi2007automated}). The work most closely related to ours is that of Assaf et al.~\cite{assaf1998statistical}, who studied a prophet inequality model in which an online decision-maker observes a random variable $X_i$ but an additional random variable $A_i$ remains hidden. Upon selecting an option, the decision-maker receives a payoff $f(X_i,A_i)$ for a given function $f$. Our model, in the extreme regime of $d = n$, corresponds to the particular case when $A_i$ is binary and $f(x,a) = x\cdot a$. Like us, Assaf et al. introduced the analogue of our $\vadm$, who observes all value realizations but not the hidden variables $A_i$. They show that the decision-maker achieves a competitive ratio of $1/2$ against the $\vadm$, but cannot obtain any constant competitive ratio against the prophet, the latter being recovered by our results.
\smallskip

\noindent\textbf{Uncertain acceptance.} There is a large body of work exploring uncertain offer acceptance in hiring problems, starting with the secretary problem~\cite{smith1975secretary,tamaki1991secretary}. More recently, Perez et~al.~\cite{perez2025robust} consider a competitive framework similar to ours; however, a key difference is that they assume adversarially chosen values revealed in \textit{random order}, whereas we only assume distributional information about values. Moreover, in this line of work the probability of acceptance of all random variables is assumed to be \textit{homogeneous} and independent of the observed value. In contrast, our model does not impose such homogeneity, allowing us to capture richer settings (e.g., high-value distribution may be less likely to be accepted). Recently, Xu~\cite{xu2025online} studies an online selection problem in which acceptance does not end the process and the reward is the sum of the values of the selected items; stopping occurring when a selection is unsuccessful.
\smallskip

\noindent\textbf{Competition in online selection problem}. It is natural to interpret uncertain acceptance as the outcome of competition, where an item may be claimed by another agent before our decision-maker successfully secures it. Competitive variants of online selection problems have received increasing attention, particularly in the secretary problem (see, e.g., \cite{immorlica2006secretary,karlin2015competitive,ramsey2024stackelberg}). Closer to our work, a number of papers have investigated competitive versions of the prophet inequality. Ezra et al.~\cite{ezra2021prophet} studied a setting in which $k$ agents simultaneously compete over the same prophet inequality instance. When several agents attempt to select the same item, a tie-breaking rule determines the winner, while the unsuccessful agents continue the process. They analyzed both random and priority-based tie-breaking rules and derived competitive guarantees for each model. Along similar lines, Gensbittel et al.~\cite{gensbittel2024competition} focused on the case of two competing agents, allowing each agent to select any previously observed value provided it has not already been claimed by the opponent. From the perspective of an individual agent, these competitive models naturally induce a prophet inequality with uncertain (and potentially delayed) acceptance. However, our setting differs in a fundamental way. We assume that acceptance events are independent across time, whereas in competitive models such as \cite{gensbittel2024competition}, acceptance outcomes are inherently correlated. Indeed, a rejection implies that another agent has claimed the corresponding item, thereby reducing future competition and increasing the probability of acceptance in subsequent rounds.
\smallskip


\section{Model}\label{sec:model}

In this section, we introduce our model, the three agents in our setting, and the three corresponding competitive ratios. Throughout the article, we will denote $\mathbb{N}$ the set of non-negative integer numbers, $[k] := \{1,...,k\}$ for any $k \in \mathbb{N}$, and $\ber(p)$ a Bernoulli random variable with  parameter $p$.

\begin{definition}
A \textbf{prophet inequality with delayed and uncertain acceptance} $(\pida)$ instance of delay $d \in \mathbb N$ consists of a sequence of $n$ independent random vectors $\Inst_d := \{Y_1,\ldots, Y_n\}$, where for any $i \in [n]$, $Y_i := (X_i, A_i)$, with $X_i$ being a non-negative random variable of finite expectation and $A_i = \ber(p_i)$ with $p_i \in [0,1]$, and such that all random variables are mutually independent.
\end{definition}

For the sake of simplicity, we assume that values and acceptance events are independent. Nevertheless, all of our results extend to the correlated setting, although the corresponding proofs become more technically involved. We refer the reader to Appendix \ref{sec:independence} for a detailed discussion of this assumption. In this setting, we consider three classes of algorithms based on the information known from the instance $\Inst_d$, each of them parsing the sequence $Y_1,\ldots,Y_n$ sequentially.
\medskip

\noindent\textbf{Decision-Maker}. A (fully) online algorithm, named \textit{decision-maker}, which, at time $i$, observes the realization of $X_i$ and must decide whether to select it or to continue to the next time step. If $X_i$ is selected, the realization of $A_i$ is revealed and the process ends whenever $A_i = 1$, with the algorithm receiving $X_i$ as payoff; otherwise, the next $d$ values are skipped and the process resumes at time $i+d+1$ (unless $i+d+1 > n$, in which case the game ends and the decision-maker obtains $0$). Under no selection, the process resumes at time $i+1$. 
\begin{definition}
    The optimal expected value of the \textbf{decision-maker} in an instance $\Inst_d$ with delay $d\in \mathbb N$ is denoted $\dm(\Inst_d)$, for dynamic programming, and defined by the following recurrence:
    \begin{align*}
        \dm(Y_i,...,Y_n) := \esp{\max\Big(p_i\cdot X_i + (1 - p_i)\cdot\dm(Y_{i+d+1}, ..., Y_n); \dm(Y_{i+1}, ..., Y_n)\Big)}, \text{ for } i \in [n],
    \end{align*}
    and $\dm(Y_i, ..., Y_n) := 0$ for $i > n$.
\end{definition}

\noindent\textbf{Value-Aware Decision-Maker}. A value-aware online algorithm, named \textit{value-aware decision-maker} ($\vadm$), that knows the value realizations $X_1,\ldots,X_n$ upfront but does not know the acceptances $A_1,\ldots,A_n$. At time $i$, without observing the realization of $A_i$, the $\vadm$ can try to select $X_i$, and the process continues/stop as in the previous class. Notice that, since the $\vadm$ knows the realizations of the random variables $(X_1,...,X_n)$, its expected value is given by applying dynamic programming on $(A_1,...,A_n)$. 

\begin{definition}
    Let $\va(\Inst_d)$ be the optimal expected value of the $\vadm$ on the instance $\Inst_d$ with delay $d$. Given $x_i$ the realization of $X_i$, for any $i \in [n]$, and denoting
        $\va\text{-}\dm_{\Inst_d}(x_1, \dots, x_n) := \dm\left((x_1, A_1), \dots, (x_n, A_n)\right)$,
        it follows $\va(\Inst_d) := \esp{\va\text{-}\dm_{\Inst_d}(X_1, \dots, X_n)}$.
\end{definition}

\begin{remark}
The definition of the $\vadm$ raises the question of whether an \textit{acceptance-aware decision-maker} could also be defined. Although we will not go deeper on this discussion during the article, we remark that such decision-maker is reduced to classical prophet inequality instances without delay, as the only uncertainty will be on the values $(X_1,...,X_n)$.
\end{remark}

\noindent\textbf{Prophet}. The full-knowledge offline algorithm, coined \textit{prophet}, that knows the realizations of all random vectors $Y_1,\ldots,Y_n$ upfront. The prophet can select at most one value with $A_i = 1$, if any. We denote $\fk(\Inst_d)$ the expected value of the prophet on instance $\Inst_d$, formally given by $\fk(\Inst_d) := \mathbb{E}\bigl[\max_{i\in[n]} A_i\cdot X_i\bigr].$
\medskip

Remark the prophet is not affected by the delay. The three considered classes of algorithms, and their respective best algorithms, give rise to the following three different \textit{competitive ratios}. 

\begin{definition}
    Given $\Inst_d$ a $\pida$ instance with delay $d$, we consider:
\begin{align*}
    \alpha(\Inst_d) := \frac{\dm(\Inst_d)}{\va(\Inst_d)}, \quad \beta(\Inst_d) :=  \frac{\va(\Inst_d)}{\fk(\Inst_d)}, \quad \gamma(\Inst_d) := \frac{\dm(\Inst_d)}{\fk(\Inst_d)},
\end{align*}
respectively, to be the competitive ratios between the decision-maker and the $\vadm$, the $\vadm$ and the prophet, and the decision-maker and the prophet.
\end{definition}
Note that for any instance $\Inst_d$, $\alpha(\Inst_d),\beta(\Inst_d)$, and $\gamma(\Inst_d)$ belong to $[0,1]$, and that it always holds that $\gamma(\Inst_d) \leq \alpha(\Inst_d)$ and $\gamma(\Inst_d) \leq \beta(\Inst_d)$.
\medskip

\noindent\textbf{Notation for the classical setting.} In order to be able to establish comparisons between our setting and the classical PI setting without confusion, we denote by $\classdp$ the expected utility achieved by the decision-maker in the classical framework. For the prophet, in exchange, we slightly abuse notation and simply write $\opt$ for its expected utility.

\section{Worst-Case Competitive Ratios}\label{sec:worst_case_competitive_ratios}

This section is devoted to study the worst-case competitive ratios of our two decision-makers against the prophet by proving the following result.

\begin{theorem}
    \label{thm:cr_deterministic_delay}
    For any $\pida$ instance $\Inst_d$ with delay $d$, it holds 
    \begin{align*}
    \crB(\Inst_d) \geq \frac{1}{d+2} \text{ and } \crC(\Inst_d) \geq \frac{1}{d+2}.
    \end{align*}
    Moreover, both inequalities are tight.
\end{theorem}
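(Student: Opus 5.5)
By definition $\crC(\Inst_d)\le \crB(\Inst_d)$ for every instance, since $\dm(\Inst_d)\le \va(\Inst_d)$. So for the lower bounds it suffices to prove $\dm(\Inst_d)\ge \fk(\Inst_d)/(d+2)$. For tightness it suffices to exhibit a family of instances with $\crB(\Inst_d)\to 1/(d+2)$, because then $\crC\le\crB$ is forced to the same limit. Write $Z_i:=A_iX_i$, so that $\fk(\Inst_d)=\esp{\max_i Z_i}$.

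\textbf{Case $d=0$.} I would first prove the reduction lemma announced in the introduction. By induction on the dynamic-programming recurrence, $\dm(Y_i,\dots,Y_n)$ equals the classical online value $\classdp$ of the instance $Z_i,\dots,Z_n$. The point is that a failed attempt costs nothing when $d=0$: attempting $i$ is weakly better than skipping exactly when $p_i x_i\ge p_i\cdot\dm(Y_{i+1},\dots,Y_n)$, which is the same comparison as accepting $Z_i$ iff $Z_i\ge \classdp(Z_{i+1},\dots,Z_n)$. The classical $1/2$ prophet inequality then gives the bound $1/(d+2)=1/2$.

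\textbf{Case $d>0$.} I would follow the Hill--Kertz style strategy hinted at in the introduction: apply a sequence of transformations to the instance, each of which weakly decreases the ratio $\dm/\fk$, until the instance is reduced to one with only $d+2$ random vectors. The key transformation uses the blocked interval. After an attempt at $i$, the $d$ slots $i+1,\dots,i+d$ are invisible to the decision-maker. Hence one can replace or insert vectors there, for instance high-value, rarely nonzero Bernoulli-type vectors, without changing $\dm$, while weakly increasing $\fk$. Standard tricks would complete the reduction:
\begin{itemize}[label=$\cdot$]
\item normalize so that $\dm(\Inst_d)=1$;
\item replace each $X_i$ by a two-point distribution that keeps $\dm$ fixed and does not decrease $\esp{\max_i Z_i}$;
\item merge consecutive ``unused'' vectors.
\end{itemize}
On the reduced instance of $d+2$ vectors the bound should come from a direct computation. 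Writing $\fk\le \sum_j \esp{Z_j}$ and observing that each single vector's contribution $\esp{Z_j}$ is at most $\dm$, since it can be obtained by a policy that only ever attempts index $j$, gives $\fk\le (d+2)\dm$. The main obstacle is making each transformation monotone in the ratio, in particular controlling $\fk$ when vectors are inserted or merged inside delay windows. A simple one-step induction on $n$ using only $\dm_i\ge \dm_{i+1}$ and $\dm_i\ge p_i\esp{X_i}+(1-p_i)\dm_{i+d+1}$ loses a constant factor, which is why the structural reduction seems needed.

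\textbf{Tightness.} I would build an instance with $d+2$ vectors in which one window of $d+1$ slots forces the $\vadm$ to choose among mutually exclusive options, while the prophet collects roughly $1$ from each of the $d+2$ slots. The first vector is $X_1=1/\varepsilon$ deterministic with $p_1=\varepsilon$. Attempting it yields $1$ in expectation but, with probability close to $1$, blocks the next $d$ slots. Slots $2,\dots,d+1$ are $X_j=1/\varepsilon$ with probability $\varepsilon$ and value $0$ otherwise, each with $p_j=1$. The final slot $d+2$ is tuned so that, whether the $\vadm$ attempts slot $1$ or not, its total value is $1+O(\varepsilon)$. Meanwhile $\fk=d+2-O(\varepsilon)$, since the events carrying the prophet's value are nearly disjoint. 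The parameters must be adjusted so that attempting slot $1$ and then still reaching slot $d+2$ does not give roughly $2$. If the simple version fails for this reason, I would instead lower $p_1$ further and give slot $d+2$ acceptance probability $\varepsilon$ as well.
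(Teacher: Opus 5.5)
Your use of $\crC\le\crB$, the $d=0$ case via the reduction lemma, and the closing bound $\fk\le\sum_j\esp{Z_j}\le(d+2)\dm$ on a $(d+2)$-vector instance are correct and match the paper. The lower bound for $d>0$, however, is not established. The reduction to $d+2$ vectors is only announced, and its one concrete step is false as stated. Slots $i+1,\dots,i+d$ are hidden only on the event that $i$ is reached and attempted, which depends on realizations. On every path where $i$ is not attempted, those slots are visited: $i$ may be skipped, or a rejection at some $j\in[i-d,i-1]$ may jump over $i$ into the window. So inserting long shots there generally raises $\dm$, and insertion also shifts all later delay windows.

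The paper handles this in two steps:
\begin{enumerate}
\item It first \emph{straightens} the instance, raising $X_i$ to $\max\bigl(X_i,(V_{i+1}-(1-p_i)V_{i+d+1})/p_i\bigr)$ with $V_i:=\dm(Y_i,\dots,Y_n)$. This fixes every $V_i$, can only increase $\fk$, and makes always attempting optimal.
\item It then merges $Y_{n-d-1}$ and $Y_n$ into one long shot $(L/p,\ber(p))$ with $L=\dm(Y_{n-d-1},Y_n)$ computed without delay. This preserves $\dm$, while the Hill--Kertz lemma (with a prepended deterministic $(\dm(\Inst_d),1)$) controls $\fk$.
\end{enumerate}

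A much shorter fix is also available: the direct argument you dismissed works once the maximum stays inside the expectation. The inequality $\dm_i\ge p_i\esp{X_i}+(1-p_i)\dm_{i+d+1}$ is what throws away the needed information. Set $V_m:=0$ for $m>n$, and use $V_i\ge V_{i+1}$ and $(a-b)^+\ge a^+-b$ for $b\ge0$. Then
\[
V_i-V_{i+1}=\esp{\bigl(p_i(X_i-V_{i+d+1})-(V_{i+1}-V_{i+d+1})\bigr)^+}\ \ge\ p_i\,\esp{(X_i-V_1)^+}-(V_{i+1}-V_{i+d+1}).
\]
Summing over $i$, the correction terms telescope to $V_2+\dots+V_{d+1}\le dV_1$. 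Hence $\sum_i p_i\esp{(X_i-V_1)^+}\le(d+1)V_1$, and so $\fk\le V_1+\sum_i\esp{(Z_i-V_1)^+}\le(d+2)V_1$ for every $d\ge0$.

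\textbf{Tightness.} Your instance fails for structural reasons. Slots $2,\dots,d+1$ have $p_j=1$ and put the randomness in $X_j$, which the $\vadm$ (indeed even the decision-maker) observes, and nothing ever blocks it from collecting them. Moreover, a rejection at the risky slot $1$ still leaves slot $d+2$ reachable. Concretely, if some $X_j=1/\varepsilon$ with $2\le j\le d+1$ (probability $1-(1-\varepsilon)^d$), the $\vadm$ skips slot $1$ and takes that $X_j$ with certainty; otherwise it attempts slot $1$ and then slot $d+2$. This gives $\va\ge\frac{1-(1-\varepsilon)^d}{\varepsilon}+(1-\varepsilon)^{d+1}\bigl(1+\esp{Z_{d+2}}\bigr)$, while $\fk\le\sum_j\esp{Z_j}=d+1+\esp{Z_{d+2}}$. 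Hence $\crB\ge(1-\varepsilon)^{d+1}\to1$ however slot $d+2$ is tuned, and the same computation disposes of your fallback (smaller $p_1$, acceptance $\varepsilon$ at slot $d+2$).

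The mutual exclusivity you want can only come from rejection-induced blocking. So the safe option must come first, and the randomness must sit in the acceptances with deterministic values. The paper takes $X_1=1$, $A_1\equiv1$, and $X_i=1/\varepsilon$, $A_i\sim\ber(\varepsilon)$ for $2\le i\le d+2$. Any rejected attempt at $i\ge2$ ends the game, so $\dm=\va=1$, whereas $\fk=\frac{1}{\varepsilon}\bigl(1-(1-\varepsilon)^{d+1}\bigr)+(1-\varepsilon)^{d+1}\to d+2$.
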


\Cref{thm:cr_deterministic_delay} shows the error that our two decision-maker incur due to the existence of the delay. Interestingly, this error behaves \textit{linearly} on $d$ with respect to the $1/2$ guarantee on classical prophet inequalities. In particular, for the extreme case of no delay, we recover the bound of Krengel and Sucheston \cite{krengel1977semiamarts}. The other extreme case when $d = n-1$ also captures a well-known result for the literature, established by Assaf et al. \cite{assaf1998statistical}, that whenever a decision-maker, upon selection of an item, receives either the selected value or zero with some probability, then its competitive ratio against the prophet is $1/n$. To prove \Cref{thm:cr_deterministic_delay} we split the analysis on the case of positive delay ($d > 0$) and no delay ($d = 0$).

\begin{remark}
Notice that, although $\crA(\Inst_d) \geq 1/(d+2)$ as well since $\crC(\Inst_d)$ acts as a lower bound, the tight value in the general delay case remains open and is conjectured to be ${1}/{2}$. On the no-delay case, in exchange, we will be able to establish the tightness, as stated in \Cref{cor:cr_d=0}.
\end{remark}

\subsection{Positive Delay Case $d > 0$}\label{sec:proof_d>0}

The proof in the case of positive delay follows a similar pipeline to the one of Hill and Kertz~\cite{hill1981ratio} in the classical prophet inequality problem. First of all, we will prove the stated inequality for the competitive ratio between the decision-maker and the prophet, which will immediately establish the inequality for the other competitive ratio. Next, we will exhibit an instance under which both ratios equal $1/(d+2)$. To prove the inequality for $\crC$, we will establish the following technical results:
\begin{itemize}[leftmargin = *, label=$\bullet$]
    \item \Cref{lem:Prophet_longshot_replacement} (whose proof is included in \cite{hill1981ratio}) states that for any instance of the original prophet inequality where the first of the random variables is deterministically equal to a well-chosen value, then replacing the last two random variables by a \textit{long-shot}, improves the utility of the prophet.
    \item \Cref{lem:straight_instances} states that for any $\pida$ instance with delay $d$, replacing it by a \textit{straight instance} (\Cref{def:straight_instance}), the prophet increases in utility while the decision-maker remains unchanged.
    \item \Cref{lem:longshot_replacement} states that on straight instances we can always reduce the size of the instance by replacing two random variables by a single \textit{long-shot}, while keeping the instance straight, without modifying the utility of the decision-maker.
\end{itemize}\vspace{-0.2cm}
\begin{lemma}
    \label{lem:Prophet_longshot_replacement}
    Given $X_1, \dots, X_n$ positive random variables with finite expectation, where $n \geq 2$, and given $\lambda \geq \esp{X_n}$, there exists $p \in (0,1]$ such that
    $$\esp{\max(\lambda, X_1, \dots, X_n)} < \esp{\max(\lambda, X_1, \dots,X_{n-2}, L_p)}, $$
    with $L_p = \frac{1}{p}\cdot\classdp(X_{n-1}, X_{n})\cdot \ber(p)$ being a long-shot.
\end{lemma}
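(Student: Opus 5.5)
The plan is to isolate the contribution of the last two variables. Set $M := \max(\lambda, X_1, \dots, X_{n-2})$, $\mu := \esp{X_n}$ and $c := \classdp(X_{n-1}, X_n)$. The one-step dynamic programming formula for the classical setting gives $c = \esp{\max(X_{n-1}, \mu)}$. Since $M \geq \lambda$ and $M$ is independent of $(X_{n-1}, X_n)$, it is enough to compare the two sides through the increments
\begin{align*}
\esp{\max(M, X_{n-1}, X_n)} - \esp{M} &= \esp{(\max(X_{n-1},X_n) - M)^+},\\
\esp{\max(M, L_p)} - \esp{M} &= p\,\esp{(c/p - M)^+} = \esp{(c - pM)^+}.
\end{align*}
Finiteness of all these expectations follows from the finite expectations of the $X_i$.

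\textbf{Step 1: strict gap for the original instance.} I will show that $\esp{(\max(X_{n-1},X_n) - M)^+} < c$. Condition on $M = m$, where $m \geq \lambda \geq \mu$. Using $(\max(a,b)-m)^+ \leq (a-m)^+ + (b-m)^+$, I would bound the two terms separately.
\begin{itemize}
\item For the $X_{n-1}$ term, I will check pointwise that $(X_{n-1}-m)^+ + \mu \leq \max(X_{n-1}, \mu)$. If $X_{n-1} \leq m$, the left side equals $\mu$. Otherwise it equals $X_{n-1} - m + \mu \leq X_{n-1}$, because $m \geq \mu$.
\item For the $X_n$ term, $\esp{(X_n - m)^+} \leq \esp{X_n} = \mu$. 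This is strict whenever $m>0$, because $X_n > 0$ makes $(X_n-m)^+ < X_n$ on an event of positive probability.
\end{itemize}
Adding the two bounds gives a conditional increment of at most $c$, with strict inequality. Note that $m \geq \lambda \geq \mu > 0$, so the strict case always applies. Integrating over $M$ yields $\esp{(\max(X_{n-1},X_n) - M)^+} \leq c - \delta$ for some $\delta > 0$.

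\textbf{Step 2: the long-shot approaches $c$.} As $p \downarrow 0$, $(c - pM)^+$ increases pointwise to $c$, because $M$ is finite almost surely. By monotone convergence, $\esp{(c-pM)^+} \to c$. I therefore pick $p \in (0,1]$ small enough that $\esp{(c - pM)^+} > c - \delta$. Combined with Step 1, this gives the claimed strict inequality.

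\textbf{Main obstacle.} The delicate point is Step 1. The naive bound only shows that the increment is at most $c$, so strictness must come from the positivity of $X_n$ together with $m \geq \lambda \geq \esp{X_n} > 0$. I also need the slack to survive integration over $M$. Since the conditional gap $\mu - \esp{(X_n-m)^+}$ is positive for every $m>0$, its expectation over $M$ is positive, which provides the required $\delta$.
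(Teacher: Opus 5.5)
Your proof is correct. The paper gives no argument of its own for this lemma and defers to Hill and Kertz, so your write-up serves as a self-contained substitute in the same spirit:
\begin{itemize}
\item you put a floor at $M$;
\item you bound the pair's excess over the floor by $\esp{\max(X_{n-1},\esp{X_n})}$ minus the slack $\esp{\min(X_n,M)}$;
\item you let $p\downarrow 0$ so that the long-shot's excess tends to the full stop value.
\end{itemize}
Your argument also shows that strictness only needs $\esp{X_n}>0$, not almost-sure positivity. This matters because the paper applies the lemma to the products $A_iX_i$, which are merely nonnegative.
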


Before stating the next technical result, we introduce the notion of \textit{straight instance}.

\begin{definition}\label{def:straight_instance}
    A $\pida$ instance $\Inst_d$ with delay $d$ is called \textbf{straight} if it holds that for any $i \in [n-1]$, and for any realization $x_i$ of $X_i$, 
    $p_i\cdot x_i + (1-p_i)\cdot \dm(Y_{i+d+1}, \dots, \, Y_n) \geq \dm(Y_{i+1}, \dots, \, Y_n)$.
    In particular, in straight instances, an optimal choice for a fully online algorithm is to always pick the current random variable.
\end{definition}

\begin{lemma}\label{lem:straight_instances}
    Given a $\piua$ instance $\Inst_d$ with delay $d$, there always exists a straight $\piua$ instance $\Inst_d^s$ with delay $d$ such that $\crC(\Inst_d) \geq \crC(\Inst_d^s)$.
\end{lemma}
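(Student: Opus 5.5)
We transform $\Inst_d$ backwards, from the last vector to the first, replacing each $Y_i$ by a vector $Y_i^s=(X_i^s,A_i^s)$. The goal is to keep the decision-maker's value unchanged and not decrease the prophet's value; then $\crC$ can only go down. The natural construction is the one used by Hill and Kertz: whenever the decision-maker would pass on $X_i$, lift its value up to the level at which it is indifferent.

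\textbf{The basic step.} Fix $i\in[n-1]$ and suppose $Y_{i+1},\dots,Y_n$ have already been made straight. Write $D_{i+1}:=\dm(Y_{i+1},\dots,Y_n)$ and $D_{i+d+1}:=\dm(Y_{i+d+1},\dots,Y_n)$; these do not involve $Y_i$.

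If $p_i>0$, define the threshold
\[
\tau_i:=\frac{D_{i+1}-(1-p_i)D_{i+d+1}}{p_i},
\]
which is exactly the value at which $p_i x+(1-p_i)D_{i+d+1}=D_{i+1}$. Set $X_i^s:=\max(X_i,\tau_i)$ and keep $A_i^s:=A_i$.

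Then the decision-maker's continuation value at step $i$ is unchanged:
\[
\esp{\max\bigl(p_iX_i+(1-p_i)D_{i+d+1},\,D_{i+1}\bigr)}=\esp{p_iX_i^s+(1-p_i)D_{i+d+1}}.
\]
By definition of $\tau_i$, the straightness inequality at index $i$ now holds for every realization of $X_i^s$. The prophet's value $\esp{\max_j A_jX_j}$ is monotone in each $X_j$, and $X_i^s\ge X_i$ pointwise, so $\fk$ does not decrease.

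If $p_i=0$, the vector $Y_i$ never contributes to the prophet. We replace it by a vector with acceptance probability $1$ and deterministic value $D_{i+1}$. This does not change the decision-maker's value, since choosing it gives exactly $D_{i+1}$ and the choice is tight. The prophet's value can only increase.

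\textbf{Earlier indices are not disturbed.} The step at index $i$ must not alter anything at other indices. Indices $j>i$ are untouched because their $\dm$ values depend only on vectors after $j$. For indices $j<i$, the quantities $\dm(Y_{j+1},\dots)$ and $\dm(Y_{j+d+1},\dots)$ involve $Y_i$, but only through the value $\dm(Y_i,\dots,Y_n)$, which we showed is preserved. By induction every suffix value $\dm(Y_k,\dots,Y_n)$ stays the same. So when we later process a $j<i$, its threshold is computed from unchanged quantities, and the straightness already established at indices $>j$ (which refers only to suffix $\dm$ values) is kept.

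Processing $i=n-1,\dots,1$ therefore yields a straight instance $\Inst_d^s$ with $\dm(\Inst_d^s)=\dm(\Inst_d)$ and $\fk(\Inst_d^s)\ge\fk(\Inst_d)$, hence $\crC(\Inst_d)\ge\crC(\Inst_d^s)$. Index $n$ needs no change. The new values are nonnegative, since $\tau_i$ may be negative but the max with $X_i\ge0$ takes care of it, and they have finite expectation. Independence of the vectors is preserved because each $X_i^s$ is a function of $X_i$ alone. The degenerate case $\fk(\Inst_d)=0$ is handled separately.

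\textbf{Main obstacle.} The delicate point is the claim that modifying $Y_i$ leaves every $\dm$ suffix value used in the straightness conditions of other indices unchanged. This matters especially because the delay makes index $j$ depend on both $\dm(Y_{j+1},\dots)$ and $\dm(Y_{j+d+1},\dots)$. The resolution is that all of these are suffix values from index $i$ or later, and each such suffix value depends on $Y_i$ only through $\dm(Y_i,\dots,Y_n)$. So a backward induction on suffix values settles it.
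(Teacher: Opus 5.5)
Your proposal is correct and follows the paper's proof: the same lifted values $X_i^s=\max(X_i,\tau_i)$ with unchanged $A_i$, a backward induction showing that every suffix value $\dm(Y_k,\dots,Y_n)$ is preserved, and monotonicity of the prophet's value in each $X_i$. The one difference is the case $p_i=0$, where your treatment is more careful than the paper's. The paper deletes such vectors, but when $d>0$ this shifts the blocking windows and need not preserve $\dm$ (it can only lower it), whereas your replacement by the deterministic vector $(D_{i+1},1)$ keeps the indexing intact and preserves $\dm$ exactly.
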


\begin{proof}
During the proof, denote $V_i := \dm(Y_i,...,Y_n)$, for any $i \in [n]$, and define, 
for any $i\in[n-1]$, $X_i^s:= \max(X_i, \frac{1}{p_i}\cdot(V_{i+1} - (1-p_i)V_{i+d+1}))$, $Y_i^s := (X_i^s, A_i)$, and $Y_n^s := Y_n$. Notice that, without loss of generality, we can assume that $p_i \neq 0$ for any $i \in [n]$, as otherwise we can remove the random vector from the instance without modifying any of the values obtained by the agents. By monotonicity of the maximum, it holds $\opt(Y_1, \dots, Y_n) \leq \opt(Y_1^s, \dots, Y_n^s)$.
On the other hand, by backward induction, it follows that $V_k = V_k^s := \dm(Y_k^s, \dots, Y_n^s)$ for any $k \in [n]$. Indeed, the equality trivially holds for $k = n$, while for $k < n$ we have,
    \begin{align*}
        V_k^s & = \esp{\max\big(V_{k+1}^s, \ p_kX_k^s + (1-p_k)V_{k+d+1}^s\big)} \\
        & = \esp{\max\left(V_{k+1}^s, p_k \cdot \max\biggl(X_k, \ \dfrac{V_{k+1} - (1-p_k)V_{k+d+1}}{p_k}\biggr) + (1-p_k)V_{k+d+1}\right)} \\
        & = \esp{\max(V_{k+1}^s, \max(p_kX_k+ (1-p_k)V_{k+d+1}, \  V_{k+1} - (1-p_k)V_{k+d+1}+ (1-p_k)V_{k+d+1}))} \\
        & = \esp{\max\big(V_{k+1}^s,\max\left( p_kX_k+ (1-p_k)V_{k+d+1}, \ V_{k+1} \right)\big)}\\
        & = \esp{\max\big(V_{k+1},\max\left( p_kX_k+ (1-p_k)V_{k+d+1}, \ V_{k+1} \right)\big)} = V_k,
    \end{align*}
    where the second and fourth equalities use the induction hypothesis.
    Therefore, setting $\Inst_d^s := (Y_1^s,...,Y_n^s)$, we obtain $\crC(\Inst_d^s) \leq \crC(\Inst_d)$. Finally, as shown in the previous computations, $$p_iX_i^s + (1-p_i)V_{i+d+1}^s = \max(p_iX_i + (1-p_i)V_{i+d+1}, V_i) \geq V_i = V_i^s,$$ proving that $\Inst_d^s$ is a straight instance.  
    \qed
\end{proof}
\vspace{-0.2cm}

From \Cref{lem:straight_instances}, we can focus on straight instances where, in particular, for $i \in [n]$, it holds 
\vspace{-0.1cm}
$$\dm(Y_i, \dots, Y_n)= p_i\cdot\esp{X_i} + (1-p_i)\cdot\dm(Y_{i+d+1}, \dots, Y_n).$$

\begin{lemma}\label{lem:longshot_replacement}
Let $\Inst_d = (Y_1,...,Y_n)$ be a straight instance with delay $d$ and $n\geq d+2$. Set $L := \dm(Y_{n-d-1},Y_n)$, where $\{Y_{n-d-1},Y_n\}$ is an instance without delay. Then, for any $p \in (0,1]$, it holds,
\begin{align*}
\dm(\Inst_d) = \dm\biggl(Y_1, \dots, Y_{n-d-2}, \biggl(\frac{L}{p}, A_p\biggr), Y_{n-d}, \dots, Y_{n-1}\biggr),    
\end{align*}
where $A_p = \ber(p)$. Moreover, this new instance is straight.
\end{lemma}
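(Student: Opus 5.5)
The plan is to compare, position by position, the dynamic-programming values of the original instance with those of the new one, working backward from the end. Write $V_k := \dm(Y_k,\dots,Y_n)$ for the original instance. Write $V'_k$ for the value of the suffix starting at position $k$ of the new instance $\Inst'_d := \bigl(Y_1,\dots,Y_{n-d-2},(L/p,A_p),Y_{n-d},\dots,Y_{n-1}\bigr)$, which has length $n-1$. In both cases the value is set to $0$ beyond the last position. I will prove that $V'_k = V_k$ for every $k\in[n-1]$, and that every straightness inequality of $\Inst'_d$ holds. Two facts will be used throughout. First, $V_k \ge V_{k+1}$, since the recurrence takes a maximum with $V_{k+1}$. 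Second, by straightness, $V_k = p_k\esp{X_k}+(1-p_k)V_{k+d+1}$, and for every realization $x_k$ we have $p_kx_k+(1-p_k)V_{k+d+1}\ge V_{k+1}$.

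\emph{Step 1: the tail $k\in\{n-d,\dots,n-1\}$.} Here $k+d+1>n$, so straightness gives $V_k = p_k\esp{X_k}$ and $p_kx_k \ge V_{k+1}$. In $\Inst'_d$, the positions $k+d+1\ge n$ also lie beyond the end. I argue by backward induction from $k=n-1$. For $k=n-1$ we have $V'_{n-1}=\esp{\max(p_{n-1}X_{n-1},0)}=V_{n-1}$. Assume $V'_{k+1}=V_{k+1}$. Then $V'_k=\esp{\max(p_kX_k,V_{k+1})}=p_k\esp{X_k}=V_k$, and the straightness inequality $p_kx_k\ge V'_{k+1}$ holds in $\Inst'_d$.

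\emph{Step 2: the long-shot position $n-d-1$.} First I identify $L$. In the no-delay instance $\{Y_{n-d-1},Y_n\}$ we get $L=\esp{\max\bigl(p_{n-d-1}X_{n-d-1}+(1-p_{n-d-1})V_n,\,V_n\bigr)}$. Straightness at $i=n-d-1$ gives $p_{n-d-1}x+(1-p_{n-d-1})V_n\ge V_{n-d}\ge V_n$. Hence $L=p_{n-d-1}\esp{X_{n-d-1}}+(1-p_{n-d-1})V_n=V_{n-d-1}$. In $\Inst'_d$, the rejection branch from position $n-d-1$ jumps to position $n$, which lies beyond the end and so has value $0$. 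Therefore
\[
V'_{n-d-1}=\max\Bigl(p\cdot\tfrac{L}{p},\,V'_{n-d}\Bigr)=\max(L,V_{n-d})=L=V_{n-d-1},
\]
using Step 1 and $V_{n-d-1}\ge V_{n-d}$. The inequality $L\ge V'_{n-d}$ is exactly straightness at this position.

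\emph{Step 3: the prefix $k\le n-d-2$.} Here $k+d+1\le n-1$, so the recurrences of both instances involve the same random vector $Y_k$. Backward induction, using Steps 1 and 2 for indices $\ge n-d-1$, gives $V'_{k+1}=V_{k+1}$ and $V'_{k+d+1}=V_{k+d+1}$. The recurrence and the straightness inequality at $k$ are therefore literally identical in the two instances, so $V'_k=V_k$ and straightness transfers. Taking $k=1$ yields $\dm(\Inst_d)=\dm(\Inst'_d)$, and $\Inst'_d$ is straight.

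I expect the main obstacle to be the index bookkeeping. The key point is that removing $Y_n$ changes nothing, because in a straight instance the tail values never use the option of passing on to $Y_n$. In addition, the rejection branch from the long-shot lands exactly beyond the new end, which is why the value $L/p$ with acceptance $p$ reproduces $V_{n-d-1}$ for every $p\in(0,1]$.
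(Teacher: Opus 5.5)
Your proposal is correct and follows the same route as the paper. Both arguments run a backward induction that matches the dynamic-programming values of the two instances position by position, split into the tail, the long-shot position $n-d-1$ (where $L=V_{n-d-1}$ and the rejection branch falls off the end), and the untouched prefix. Your version is slightly more careful than the paper's: you check straightness of the new instance explicitly at each position, and you obtain $\max(L,V'_{n-d})=L$ from $L=V_{n-d-1}\ge V_{n-d}$ instead of invoking straightness of the new instance before it has been established.
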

\begin{proof}
Set $Y_i' = Y_i$ for any $i \in [n-1]\setminus\{n-d-1\}$, and $Y'_{n-d-1} = (\frac{L}{p}, A_p)$. We will prove the result by backward induction, by showing that for any $i \in [n-1]$ and $j\in \{i, ..., n-1\}$, $\dm(Y_j', ..., Y_{n-1}') = \dm(Y_j, ..., Y_{n})$. We split the analysis in three cases.
\begin{enumerate}
\item Suppose $i = n-1$. It follows that $\dm( Y_{n-1}') = \dm( Y_{n-1}) = \dm(Y_{n-1}, Y_n)$ as the instance is straight with positive delay, i.e. at time $n-1$ it is optimal to choose $Y_{n-1}$, and in case of rejection, $Y_n$ would not be observed.
\item Suppose $i=n-d-1$, it holds \vspace{-0.2cm}
$$\dm(Y_{n-d-1}, \dots, Y_n) = L = \esp{A_p\cdot\frac{L}{p}} = \dm\left(\left(\frac{L}{p}, A_p\right), Y_{n-d}, \dots, Y_{n-1}\right),$$ 
where the first equality comes from the straightness of $\{Y_1, \dots, Y_n\}$ and the definition of $L$, and the third equality from the straightness of $\{(\frac{L}{p}, A_p), Y_{n-d}, \dots, Y_{n-1}\}$ and the fact that the instance has $d+1$ variables, so the decision-maker gets either ${L}/{p}$ or $0$.
\item For any missing case, it follows,
\begin{align*}
\dm(Y_i', \dots, Y_{n-1}') & = \dm(Y_i, Y_{i+1}', \dots, Y_{n-1}')\\
& = \esp{\max(p_iX_i + (1-p_i)\dm(Y_{i+d+1}', ..., Y_{n-1}'), \dm(Y_{i+1}', ..., Y_{n-1}'))}\\
& = \esp{\max(p_iX_i + (1-p_i)\dm(Y_{i+d+1}, ..., Y_{n}), \dm(Y_{i+1}, ..., Y_{n}))}\\
& = \dm(Y_i, ..., Y_n),
\end{align*}
where the third equality holds by induction hypothesis, that can be applied when the starting index is in $\{i+1, ..., n-1\}$, which is the case here as $i \neq n-1$ (and thus $i+1\neq n$) and $i \neq n-d-1$ (and thus $i+d+1 \neq n$).
\end{enumerate}
Finally, regarding the straightness of $(Y_1',...,Y_{n-1}')$, the result comes from the fact that the original instance is straight and that the modified random variable is replaced by a value equal to the expected future reward of the decision-maker. \qed
\end{proof}\vspace{-0.2cm}

The final result needed to prove \Cref{thm:cr_deterministic_delay} is our \textit{reduction lemma} (\Cref{lemma:reduction_lemma}), which is established in the following section. Informally, the lemma states that, in any $\pida$ instance with no delay, the decision-maker achieves the same expected utility as in the corresponding classical prophet inequality instance obtained by replacing each random value with the product of its value and acceptance indicator. Since this result is only needed for the subsequent proof, we defer its formal statement until the next section.
\medskip

\noindent\textit{Proof of \Cref{thm:cr_deterministic_delay}}. 
The proof consists of applying the technical lemmas to transform the instance $\Inst_d$ into an alternative instance $\Inst'_d$ with $d+2$ random vectors only, such that $\crC(\Inst_d) \geq \crC(\Inst_d')$. Then, noticing that selecting a random variable uniformly in $\Inst_d'$ gives a competitive ratio of at least $\frac{1}{d+2}$, the stated lower bound for $\crC$ (and therefore, for $\crB$) is obtained.
    
First of all, by \Cref{lem:straight_instances}, suppose that $\Inst_d$ is a straight instance. Denote $L = \dm(Y_{n-d-1},Y_n)$ and $L' = \dm(Y_{n-d-2},Y_{n-1})$, with both instances $\{Y_{n-d-1},Y_n\}$ and $\{Y_{n-d-2},Y_{n-1}\}$ considered without delay. By \Cref{lemma:reduction_lemma}, it follows,
\begin{align*}
    L = \classdp(A_{n-d-1}\cdot X_{n-d-1}, A_n\cdot X_n) \text{ and } L' = \classdp(A_{n-d-2}\cdot X_{n-d-2}, A_{n-1}\cdot X_{n-1}),
\end{align*}
where we recall that $\classdp$ denotes the best online algorithm on the classical PI setting. Next, applying twice \Cref{lem:Prophet_longshot_replacement} and twice \Cref{lem:longshot_replacement}, there exist $p,p'\in (0,1]$ and random variables $A_p = \ber(p), A_{p'} = \ber(p')$, such that,
\begin{align*}
    \crC(\Inst_d) \geq \crC\left((\dm(\Inst_d),1),\, Y_1,\, \dots,\, Y_{n-d-2},\, \left(\frac{L'}{p'}, A_{p'}\right), \,\left(\frac{L}{p}, A_p\right), \,Y_{n-d}, \, \dots, \,Y_{n-2}\right),
\end{align*}
where the resulting instance has one random vector less. For a more detailed version of this computation, please refer to Appendix~\ref{sec:missing_proof} (cf. \Cref{lem:minus_2_variables}). By induction, we obtain the stated lower bound for both $\crB$ and $\crC$. Regarding their tightness, 
%
%
%
consider the instance $\Inst_d(\varepsilon) := (Y_1, \dots, Y_{d+2})$, where $X_1 = 1$ and $A_1 = 1$ are both deterministically equal to $1$ and for $i > 1$, $X_i = \frac{1}{\varepsilon}$ and $A_i = \ber(\varepsilon)$.
Notice that $\dm(\Inst_d(\varepsilon)) = 1$, while, 
\begin{align*}
\opt(\Inst_d(\varepsilon)) \  =\ \frac{1}{\varepsilon} (1-(1-\varepsilon)^{d+1}) + (1-\varepsilon)^{d+1} = \frac{1}{\varepsilon}(1 - 1 + \varepsilon (d+1) +o(\varepsilon)) + 1 + o(1),  
\end{align*}
where the second equality holds for $\varepsilon \approx 0$. In particular, $\opt(\Inst_d(\varepsilon))\to d+2$ when $\varepsilon\to 0$. Since the random variables $(X_1,...,X_{d+2})$ are deterministic, we obtain the tightness of both $\crB$ and $\crC$. 
\qed\vspace{-0.2cm}

\subsection{No Delay Case ($d=0$)}\label{sec:proof_d=0}

The proof of \Cref{thm:cr_deterministic_delay} in the no-delay case comes from the following \textit{reduction lemma}. Recall that $\classdp$ denotes the optimal online algorithm in the classical PI setting.

\begin{lemma}[Reduction Lemma]\label{lemma:reduction_lemma}
    Let $\Inst_0 = (Y_1,...,Y_n)$ be a $\pida$ instance with no delay $(d=0)$. Denote, for any $i \in [n]$, $Z_i := X_i \cdot A_i$ and $\Inst' := (Z_1,...,Z_n)$. It holds then that $\dm(\Inst_0) = \classdp(\Inst')$.
\end{lemma}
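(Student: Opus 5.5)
We will prove $\dm(Y_i,\dots,Y_n) = \classdp(Z_i,\dots,Z_n)$ for every $i\in[n]$ by backward induction on $i$. The base case $i>n$ is trivial, since both sides are $0$. For $i=n$ we have $\dm(Y_n) = \esp{\max(p_nX_n, 0)} = p_n\esp{X_n} = \esp{Z_n} = \classdp(Z_n)$, using independence of $X_n$ and $A_n$.

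For the inductive step, write $V_{i+1} := \dm(Y_{i+1},\dots,Y_n)$. By the induction hypothesis, $V_{i+1} = \classdp(Z_{i+1},\dots,Z_n)$. With $d=0$, the recurrence defining $\dm$ reads
\begin{align*}
\dm(Y_i,\dots,Y_n) = \esp{\max\big(p_iX_i + (1-p_i)V_{i+1},\, V_{i+1}\big)} = V_{i+1} + p_i\,\esp{(X_i - V_{i+1})^+},
\end{align*}
since $\max(p_ix+(1-p_i)v, v) = v + p_i(x-v)^+$. The classical recurrence gives
\begin{align*}
\classdp(Z_i,\dots,Z_n) = \esp{\max(Z_i, V_{i+1})} = V_{i+1} + \esp{(A_iX_i - V_{i+1})^+}.
\end{align*}
Because $V_{i+1}\ge 0$, we have $(A_iX_i - V_{i+1})^+ = A_i\,(X_i - V_{i+1})^+$: when $A_i=0$ both sides vanish. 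By independence of $A_i$ and $X_i$, the expectation of this product is $p_i\,\esp{(X_i-V_{i+1})^+}$. The two expressions therefore coincide, which closes the induction. Taking $i=1$ gives the lemma.

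The only delicate point is the identity $(A_iX_i - v)^+ = A_i(X_i-v)^+$, which requires $v\ge 0$. This holds because all values are nonnegative, so continuation values are nonnegative. The identity is what lets the decision-maker's inability to observe $A_i$ before committing cost nothing. Intuitively, with no delay a rejection simply returns the process to the same continuation value, so attempting costs nothing, and the attempt is worthwhile exactly when $X_i$ exceeds the continuation value. This is the same rule an acceptance-aware agent would follow when $A_i=1$. Independence of $X_i$ and $A_i$ is used for the product expectation. In the correlated case one would instead condition on $X_i$ and replace $p_i$ by $\prob{A_i=1\mid X_i}$, and the same pointwise argument goes through.
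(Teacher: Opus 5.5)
Your proof is correct and follows essentially the same route as the paper: backward induction on $i$, the identity $\max(px+(1-p)v,\,v)=v+p(x-v)^+$, and the fact that $(A_iX_i-v)^+=A_i(X_i-v)^+$ for $v\ge 0$. The paper carries out the step conditionally on $X_i=x$, so it covers the correlated case directly, as you remark. It also uses the nonnegativity of the continuation value only implicitly, whereas you make it explicit.
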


The reduction lemma comes from the fact that the decision-maker incurs no error when selecting a value and being rejected as she can still play for the following random variable. In particular, there is no difference on knowing the realization of $A_i$ at time $i$ prior to selecting it. For the formal proof, please refer to Appendix~\ref{sec:missing_proof}. The reduction lemma allows us to establish the following result, obtaining a richer version of \Cref{thm:cr_deterministic_delay} in the no-delay case.


\begin{corollary}\label{cor:cr_d=0}
For any $\pida$ instance with no delay $\mathcal{I}_0$, it holds $\crA(\Inst_0),  \crB(\Inst_0), \crC(\Inst_0) \geq 1/2$. Moreover, all three bounds are tight.
\end{corollary}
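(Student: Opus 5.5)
The plan is to obtain all three lower bounds from the reduction lemma and the classical prophet inequality, and then to exhibit explicit instances for tightness.

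For the lower bound on $\crC$, I invoke \Cref{lemma:reduction_lemma} to get $\dm(\Inst_0) = \classdp(Z_1,\dots,Z_n)$ with $Z_i = X_i A_i$. The variables $Z_i$ are independent, nonnegative and of finite expectation. Moreover, $\fk(\Inst_0) = \esp{\max_i A_i X_i} = \opt(Z_1,\dots,Z_n)$ by definition. The classical prophet inequality of Krengel--Sucheston / Samuel-Cahn then yields $\classdp(\Inst') \ge \frac12 \opt(\Inst')$, hence $\crC(\Inst_0)\ge 1/2$. Since $\dm \le \va \le \fk$, this also gives $\crA(\Inst_0) = \dm/\va \ge \dm/\fk \ge 1/2$ and $\crB(\Inst_0) = \va/\fk \ge \dm/\fk \ge 1/2$.

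For tightness of $\crC$ and $\crB$, I use the $d=0$ case of the instance from the positive-delay proof. Take $Y_1 = (1,1)$ deterministic and $Y_2 = (1/\varepsilon, \ber(\varepsilon))$. The values are deterministic, so $\va = \dm$. Computing directly, $\dm = \max(1, \varepsilon\cdot\frac1\varepsilon) = 1$, while $\fk = 1 + (1-\varepsilon)$, which tends to $2$. Hence $\crB = \crC \to 1/2$.

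Tightness of $\crA$ is the main obstacle, since it needs an instance where the $\vadm$ is close to the prophet while the decision-maker is stuck at half. The idea is to put the randomness in the values and make acceptance deterministic, so the $\vadm$ essentially becomes the prophet. Take $Y_1 = (1,1)$ and $Y_2 = (X_2,1)$ with $X_2 = 1/\varepsilon$ with probability $\varepsilon$ and $0$ otherwise. With $A_i\equiv 1$, the $\vadm$ knows everything, so $\va = \esp{\max(1,X_2)} = 2-\varepsilon$. The decision-maker is exactly the classical online agent, so $\dm = \max(1, \esp{X_2}) = 1$. Therefore $\crA \to 1/2$. I would double-check that the definition of $\va$ with deterministic $A_i$ indeed reduces to $\esp{\max_i X_i}$: when $A_i\equiv1$, the dynamic program on known values simply selects the maximum, so this holds. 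These three computations together complete the corollary.
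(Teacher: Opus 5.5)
Your proof is correct. The lower bounds (via \Cref{lemma:reduction_lemma}, the identity $\fk(\Inst_0)=\opt(Z_1,\dots,Z_n)$, and $\dm\le\va\le\fk$) match the paper's proof exactly. So does your tightness instance for $\crA$: acceptance is certain and $X_2=\frac1\varepsilon\ber(\varepsilon)$.

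You differ from the paper on the tightness of $\crB$ and $\crC$. The paper embeds the classical prophet inequality by taking $A_i\equiv 1$. It then asserts that both the decision-maker and the $\vadm$ reduce to the classical online agent. That works for $\crC$ but not for $\crB$. With $A_i\equiv 1$ the $\vadm$ knows every realization, and its dynamic program simply returns $\max_i X_i$, as you check in your last paragraph. Hence $\va=\fk$ and $\crB=1$.

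Your instance is $(1,1)$, $(1/\varepsilon,\ber(\varepsilon))$, the $d=0$ case of the tightness instance in \Cref{thm:cr_deterministic_delay}. It works because the classical worst case is a scaled-Bernoulli instance, so its randomness can be moved entirely into the acceptances with deterministic values. This gives $\va=\dm=1$ and $\fk=2-\varepsilon$, which settles $\crB$ and $\crC$ at once. Your route is valid, and it repairs the one step of the paper's argument that fails as written.
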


\begin{proof}
The lower bound on $\crC$ follows from observing that the value of the prophet in any $\pida$ instance $\Inst_0$ coincides with the value of the classical prophet in the instance $(Z_1,...,Z_n)$ defined in the statement of \Cref{lemma:reduction_lemma}. Since the best online algorithm $\classdp$ is known to achieve at least ${1}/{2}$ of the prophet's value in the classical setting, \Cref{lemma:reduction_lemma} immediately implies the lower bound for $\crC$. The same lower bound then holds for $\crA$ and $\crB$ as for any instance $\Inst_0$, $\crA(\Inst_0), \crB(\Inst_0) \geq \crC(\Inst_0)$. For the tightness of $\crB$ and $\crC$, it suffices to note that our setting strictly generalizes the classical one as we can recover it by considering random variables $(A_1,...,A_n)$ such that $\prob{A_i = 1} = 1$, for all $i\in [n]$, with both the decision-maker and the $\vadm$ being in this case equivalent to the decision-maker in $\cpi$. For the missing competitive ratio, the tightness comes from the following instance, for a fixed $\varepsilon > 0$, $\Inst_0(\varepsilon) := \{Y_1, Y_2\}$, with $X_1 = 1$, $A_1 = 1$, $X_2 = \frac{1}{\varepsilon} \ber(\varepsilon)$ and $A_2 = 1$. Indeed, it holds $\crA(\Inst(\varepsilon)) = \frac{1}{2-\varepsilon}$, which converges to $1/2$ as $\varepsilon \to 0$.
\qed
\end{proof}

The equivalence established in \Cref{lemma:reduction_lemma} allows to transfer classical results from $\cpi$ to $\pida$ under no delay. For example, the following proposition comes for free.

\begin{proposition}\label{prop:thresholds_are_optimal}
Single-threshold algorithms are worst-case optimal in $\pida$ without delay.
\end{proposition}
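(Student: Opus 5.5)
The plan is to transport the classical single-threshold guarantee of Samuel-Cahn through the Reduction Lemma (\Cref{lemma:reduction_lemma}) and then show that nothing better than $1/2$ is possible. First I will fix what is being claimed. In the no-delay setting there is a single-threshold algorithm: choose $\tau$ and attempt to select $X_i$ whenever $X_i \geq \tau$, continuing if rejected. The claim is that the best such algorithm guarantees $1/2$ of $\opt(\Inst_0)$ on every instance. By \Cref{cor:cr_d=0}, $1/2$ is also the best worst-case guarantee of \emph{any} online algorithm, so single thresholds lose nothing in the worst case.

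For the guarantee, I work with the classical instance $\Inst' = (Z_1,\dots,Z_n)$, $Z_i = X_i A_i$. As noted in the proof of \Cref{cor:cr_d=0}, $\opt(\Inst_0) = \esp{\max_i Z_i}$.
\begin{enumerate}
\item Pick the Samuel-Cahn threshold $\tau$ with $\prob{\max_i Z_i \geq \tau} = 1/2$, or the usual median or $\esp{\max_i Z_i}/2$ choice, handling atoms by randomized tie-breaking.
\item In $\Inst_0$, the threshold policy ``attempt $X_i$ iff $X_i \geq \tau$'' stops exactly at the first index $i$ with $Z_i \geq \tau$, for $\tau>0$. This holds because attempts with $A_i=0$ cost nothing when $d=0$ and the process simply continues. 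Hence the payoff of this policy equals, realization by realization, the payoff of the classical threshold rule on $\Inst'$.
\item The classical argument then gives payoff at least $\tfrac12\esp{\max_i Z_i} = \tfrac12\opt(\Inst_0)$.
\end{enumerate}
This coupling is the concrete content behind ``comes for free'' from \Cref{lemma:reduction_lemma}. The lemma itself gives equality of the optimal values, and the coupling extends it to threshold policies specifically.

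For optimality I cite the tightness in \Cref{cor:cr_d=0}: the instances with all $A_i\equiv 1$ reduce to the classical worst case, where no online algorithm exceeds $1/2+o(1)$. So single-threshold algorithms attain the optimal worst-case ratio $1/2$ against the prophet. Against the $\vadm$ the same holds, since $\va \leq \opt$ gives ratio $\geq 1/2$, which is tight by the $\crA$ instance.

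The only delicate step is step 2, namely that rejection is free when $d=0$. This must include the boundary case $\tau = 0$ and atoms of the $X_i$. I would handle atoms with randomized tie-breaking exactly as in the classical proof. Applied to $X_i$ with independent coins, this yields the same stopping law as tie-breaking on $Z_i$, restricted to $A_i=1$.
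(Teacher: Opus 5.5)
Your proposal is correct and matches the paper's own justification. The paper uses the same realization-by-realization coupling: because rejections are free when $d=0$, a threshold rule on $X_i$ stops at the same index and with the same payoff as the threshold rule on $Z_i = X_i A_i$. It then combines this with the classical $1/2$ guarantee and the tightness in \Cref{cor:cr_d=0}. If you take $\tau = \opt(\Inst_0)/2$, you can drop the tie-breaking and $\tau = 0$ caveats entirely.
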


\Cref{prop:thresholds_are_optimal} provides insight into whether uncertain acceptance makes the no-delay setting fundamentally more difficult for the decision-maker than the classical prophet inequality problem. At least in the worst case, the answer is \textit{no}. Recall that a single-threshold algorithm with threshold $\tau$ accepts the first realization exceeding $\tau$. Suppose that $x_i \geq \tau$ for some $i \in [n]$. If $A_i = 1$, then the decision-maker receives exactly the same payoff in both our model and the classical setting. On the other hand, if $A_i = 0$, the decision-maker continues searching in both models: in our setting because the application is rejected, and in the classical setting because the corresponding realization is $z_i = 0$, where $Z_i = X_i \cdot A_i$.


\section{Case of Uncertain Acceptance without Delay}\label{sec:no_delay}




\Cref{cor:cr_d=0} reveals an intriguing phenomenon in $\pida$ instances without delay. In the worst case, the decision-maker guarantees half the utility of the $\vadm$, who in turn guarantees half the utility of the prophet. Yet, the decision-maker also guarantees half the utility of the prophet directly. This naturally raises the question of characterizing on which instances knowing the realized values in advance actually helps the $\vadm$ bridge the gap to the prophet, or this additional information actually provides no worst-case advantage, making the $\vadm$ effectively as weak as the decision-maker. Motivated by this observation, in this section we give a sufficient condition for the $\vadm$ to surpass the $1/2$-barrier.


A first intuition might be that whenever acceptance probabilities are high, the $\vadm$ has essentially as much information as the prophet, whereas when acceptance probabilities are low, the $\vadm$ loses most of the informational advantage and behaves similarly to the decision-maker. While the latter intuition does not hold (see \Cref{prop:end_of_hopes}), the former is valid (see \Cref{cor:AB_on_bounded_acceptance}).

\begin{proposition}\label{prop:end_of_hopes}
    There exist $\pida$ instances $\Inst_0 = (Y_1,...,Y_n)$ without delay verifying $\alpha(\Inst_0) \to {1}/{2}$ and $\beta(\Inst_0) \to 1$, for arbitrary acceptance probabilities.
\end{proposition}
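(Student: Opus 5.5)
The plan is to build an explicit two-variable instance in which the $\vadm$ almost never faces a real choice under acceptance uncertainty, while the decision-maker faces exactly the classical hard prophet instance. The guiding observation is that when, given the realized values, at most one option is worth trying, the $\vadm$ and the prophet obtain essentially the same value. Meanwhile the decision-maker, who cannot see future values, still pays the classical factor $2$. The acceptance probability of the risky option will be a free parameter $p\in(0,1]$; this is the sense in which acceptance probabilities are arbitrary, and in particular $p$ may be arbitrarily small.

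Concretely, fix $p\in(0,1]$ and $\varepsilon\in(0,1)$, and consider $\Inst_0(\varepsilon)=(Y_1,Y_2)$ with $d=0$:
\begin{itemize}
\item $X_1=1$ and $A_1=1$ deterministically;
\item $X_2=\frac{1}{\varepsilon p}\,\ber(\varepsilon)$ and $A_2=\ber(p)$.
\end{itemize}
The first step is to compute $\dm$. At time $1$ the decision-maker compares the sure payoff $1$ with the continuation value $p\,\esp{X_2}=1$. Since these are equal, $\dm(\Inst_0(\varepsilon))=1$. Equivalently, by \Cref{lemma:reduction_lemma}, this is the classical instance $(1,\frac{1}{\varepsilon}\ber(\varepsilon p))$.

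The second step computes $\va$ by conditioning on the value of $X_2$.
\begin{itemize}
\item With probability $\varepsilon$, $X_2=\frac{1}{\varepsilon p}$. Skipping $Y_1$ then yields $p\cdot\frac{1}{\varepsilon p}=\frac1\varepsilon\ge 1$, so the $\vadm$ skips and gets $\frac1\varepsilon$.
\item Otherwise $X_2=0$, and the $\vadm$ takes $Y_1$ for $1$.
\end{itemize}
Hence $\va(\Inst_0(\varepsilon))=\varepsilon\cdot\frac1\varepsilon+(1-\varepsilon)=2-\varepsilon$.

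The third step computes the prophet's value:
$$\opt(\Inst_0(\varepsilon))=\esp{\max(1,A_2X_2)}=1+\varepsilon p\Bigl(\tfrac{1}{\varepsilon p}-1\Bigr)=2-\varepsilon p.$$
Therefore $\alpha(\Inst_0(\varepsilon))=\frac{1}{2-\varepsilon}\to\frac12$ and $\beta(\Inst_0(\varepsilon))=\frac{2-\varepsilon}{2-\varepsilon p}\to1$ as $\varepsilon\to0$, for every fixed $p$.

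The main obstacle is interpretive rather than computational: deciding what ``arbitrary acceptance probabilities'' must cover. The construction above lets the informative option have any acceptance probability $p$, while the safe option has $p_1=1$. If the statement is meant to allow every $p_i$ to be small, I would first try to keep the same structure, replacing $Y_1$ by a block of many safe options with acceptance $q$. The aim would be for the $\vadm$ to still, with high probability, face at most one informative option and thus keep $\beta\to1$. The delicate point is to tune the values so that the decision-maker's repeated tries on the safe block do not lift $\alpha$ above $\frac12$. This retuning is where I expect most of the work to lie.
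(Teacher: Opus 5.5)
Your two-variable computation is correct ($\dm=1$, $\va=2-\varepsilon$, $\opt=2-\varepsilon p$). It is exactly the paper's construction with its block of safe options collapsed into a single option accepted with certainty. The gap is the one you flagged yourself. The proposition is meant to cover instances in which \emph{every} acceptance probability is arbitrary, in particular all small. Its purpose in the paper is to refute the intuition that low acceptance probabilities neutralize the $\vadm$'s informational advantage. With $p_1=1$, your instance shows only that a small \emph{minimum} acceptance probability does not prevent $\beta\to1$. You leave the all-small case unproven.

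The missing step is much easier than you expect, and no retuning is needed. Replace $Y_1$ by $n$ independent copies of $(1,\ber(q))$ for an arbitrary $q\in(0,1)$, and keep the long-shot $\bigl(\frac{1}{\varepsilon p_L}\ber(\varepsilon),\ber(p_L)\bigr)$ last.
\begin{itemize}
\item \textbf{Decision-maker.} The long-shot gives continuation value exactly $p_L\esp{X_L}=1$. Every safe option is worth exactly this continuation value, so each step of the recursion gives $\max(q+(1-q)\cdot 1,\,1)=1$. Repeated tries on the block therefore cannot lift the decision-maker above $1$, and $\dm=1$.
\item \textbf{$\vadm$.} It skips the block when $X_L>0$, earning $1/\varepsilon$. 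Otherwise it tries all safe options, earning $1-(1-q)^n$. Hence $\va=1+(1-\varepsilon)(1-(1-q)^n)$.
\item \textbf{Prophet.} $\opt=1+(1-\varepsilon p_L)(1-(1-q)^n)$.
\end{itemize}
Letting $n\to\infty$ and then $\varepsilon\to0$ gives $\alpha\to1/2$ and $\beta\to1$ for any fixed $q,p_L\in(0,1)$. This is precisely the paper's argument. The only role of $n\to\infty$ is to make the block of uncertain safe options behave like your single sure option.
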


The proof of \Cref{prop:end_of_hopes} is included in Appendix \ref{sec:missing_proof}. To show the advantage of the $\vadm$ when acceptance is likely, we state a broader result in the classical prophet inequality framework.

\begin{theorem}\label{thm:CR_bernoulli}
    Consider a $\cpi$ instance $\Inst = (X_1,...,X_n)$ such that, for any $i \in [n]$, $X_i = \lambda_i\cdot \ber(p_i)$, where $\lambda_i \in \mathbb{R}_+$ and $p_i \in (0,1)$ are fixed scalars. Then, it holds,
    \begin{align*}
    \frac{\classdp(\Inst)}{\opt(\Inst)} \geq \frac{1}{2 - \min_{i\in [n]}p_i}.
    \end{align*}
    Moreover, this bound is tight and can be achieved using a single-threshold algorithm.
\end{theorem}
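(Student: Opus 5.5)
The approach is to show that a well-chosen single threshold already attains the bound. I will reduce the comparison between $\classdp$ and $\opt$ to a finite-dimensional optimization problem over the parameters $(\lambda_i,p_i)$, and then bound that problem by a linear program. Throughout, write $p:=\min_i p_i$.

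\textbf{Step 1: normalize the instance.} Since $\opt(\Inst)$ does not depend on the arrival order, relabel the values so that $\lambda_{(1)}\ge \lambda_{(2)}\ge\dots\ge\lambda_{(n)}$. Setting $r_k:=\prod_{j\le k}(1-p_{(j)})$, with $r_0=1$, the prophet's value is
\[
\opt(\Inst)=\sum_k \lambda_{(k)}\,(r_{k-1}-r_k).
\]
Take any threshold that accepts exactly the top-$k$ values. Its payoff depends on the arrival order, but for Bernoulli variables the worst order is the increasing one. To see this, note that swapping two adjacent accepted items so that the larger value comes first can only raise the payoff, by an exchange computation on $\lambda_a p_a+(1-p_a)\lambda_b p_b$. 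So I will lower bound $\classdp(\Inst)$ by the best such top-$k$ threshold under increasing order. I will also check that this bound is at most what $\classdp$ gets in the actual order.

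\textbf{Step 2: derive a per-item lower bound.} The candidate inequality is the following. For the threshold set $S_k$ (the top-$k$ items), let $q_k:=1-r_k$. Item $i\in S_k$ is reached whenever none of the other items of $S_k$ is active, which has probability at least $r_k/(1-p_i)$. Hence
\[
\classdp(\Inst)\;\ge\;\sum_{i\in S_k}\lambda_i p_i\,\frac{r_k}{1-p_i}.
\]
A refined version keeps the ordered contributions instead of this crude bound. This factor $1/(1-p_i)\ge 1/(1-p)$ is exactly where the gain over $1/2$ must come from. In the classical argument every item is only guaranteed the probability $1-q$ of being reached, whereas here an active item is never blocked by itself.

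\textbf{Step 3 (the main obstacle): solve the resulting LP.} Normalize $\opt(\Inst)=1$. Treat the increments $\mu_k:=\lambda_{(k)}-\lambda_{(k+1)}\ge 0$ as the variables, with the $p_{(j)}$ fixed. Then both $\opt$ and each threshold payoff are linear in the $\mu_k$. The worst-case ratio is therefore the value of the linear program
\[
\min\; t \quad\text{s.t.}\quad \mathrm{ALG}_k(\mu)\le t \text{ for all } k,\qquad \opt(\mu)=1,\qquad \mu\ge 0.
\]
I will exhibit a feasible dual solution, namely a convex combination of the threshold indices $k$ with weights chosen to equalize coefficients, certifying $t\ge 1/(2-p)$. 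The difficulty is choosing the weights uniformly in the $p_{(j)}$. My first attempt is to choose the weights proportional to $r_{k-1}-r_k$ and rescale, then use $1-p_i\le 1-p$ to bound every coefficient. If that fails, I would first reduce to the case where all $p_i=p$, arguing that splitting or raising probabilities only helps the algorithm, and solve that LP explicitly.

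\textbf{Step 4: tightness.} I will read off the worst case from the complementary slack primal solution. I expect it to consist of a sure-ish low value with probability $p$ together with a long-shot. I will then check by direct computation that the ratio tends to $1/(2-p)$; a natural guess is the two-item instance $\lambda_1=1$ with success probability $p$ and $\lambda_2=1/\varepsilon$ with a small success probability, in the limit $\varepsilon\to 0$. The minimum over $i$ must nonetheless equal $p$, which constrains the long-shot. This will likely force me to replace it by many copies with probability $p$ and geometrically increasing values. The achieving single threshold is the index $k$ selected by the dual weights.
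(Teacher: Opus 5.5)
\textbf{Genuine gap: Step 3, the core of the theorem, is never carried out.} Your framework is sound. For a fixed threshold only the relative order of the accepted items matters, and the increasing order is worst. For fixed $p_{(j)}$, the worst case is an LP in the increments whose dual is a convex combination of threshold policies. But you never produce the dual certificate, and each concrete ingredient you offer for it fails.
\begin{itemize}
\item \emph{The Step 2 bound is far too lossy.} Take $n$ copies of $\ber(p)$ with all $\lambda_i=1$. The bound gives $\max_k kp(1-p)^{k-1}$, which is $1/2$ at $p=1/2$ and tends to $1/e$ as $p\to 0$, while $\opt\to 1$. So it does not even recover $1/2$, and the factor $1/(1-p_i)$ is not where the gain comes from.
\item \emph{The weights $w_k\propto r_{k-1}-r_k$ fail on the extremal instance} (value $1$ with probability $q\to 1$, value $1/p$ with probability $p$). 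They put mass about $(p,1-p)$ on (top-$1$, top-$2$). On the extreme ray $\lambda_{(1)}=1$, $\lambda_{(2)}=0$, this combination earns $p^2$ against $\opt=p$, so it certifies only $p<1/(2-p)$. In the limit, the only weights that work are $\bigl(1/(2-p),(1-p)/(2-p)\bigr)$.
\item \emph{The premise of your fallback is false.} For $X_1=\ber(p_1)$ and $X_2=\frac1p\ber(p)$ one has $\classdp=1$ and $\opt=1+(1-p)p_1$. So raising $p_1$ strictly lowers the ratio, rather than helping the algorithm.
\item \emph{The tightness guess is off.} The tight instance is exactly the one above with $p_1\to 1$: a near-deterministic $1$ followed by $\frac1p\ber(p)$, with ratio tending to $1/(2-p)$. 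No small-probability long-shot or geometric sequence is needed, since only the minimum probability is constrained.
\end{itemize}

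\textbf{How the paper closes it.} The paper avoids the dual entirely.
\begin{itemize}
\item It sorts the variables and uses that a threshold is optimal on ordered instances.
\item It replaces the prefix skipped by the optimal threshold with a deterministic variable equal to $\classdp(\Inst)$, normalized to $1$. The problem becomes maximizing $\opt=\sum_i\lambda_ip_i\prod_{j>i}(1-p_j)$ subject to $F_i\le 1$, where $F_i$ is the expected first nonzero value among $X_i,\dots,X_n$.
\item An exchange argument that shifts value to the last variable shows some optimum has all $F_i=1$. This forces $\lambda_i=1$ for $i<n$ and $p_n\lambda_n=1$, giving value $2-p_n\le 2-p$.
\end{itemize}

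\textbf{Closing your LP route instead.} The missing certificate can be written as a potential. In increasing order, let $G_i:=\esp{\max_{j\le i}X_j}$ and $\Phi_i:=G_i+(1-p)F_{i+1}$, with $F_{n+1}=0$. Using $G_i=p_i\lambda_i+(1-p_i)G_{i-1}$ and $p_i\lambda_i=F_i-(1-p_i)F_{i+1}$, one gets
$\Phi_i-(1-p_i)\Phi_{i-1}=[1-(1-p_i)(1-p)]F_i+(p_i-p)F_{i+1}\le p_i(2-p)\max_jF_j$.
Since $\Phi_0=(1-p)F_1$, induction yields $\opt=\Phi_n\le(2-p)\max_jF_j$. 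Each $F_j$ is the value of a single-threshold policy. Together with your Step 1, this gives both the bound and the single-threshold claim.
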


\Cref{thm:CR_bernoulli} is the main contribution of this section. Notice it applies to a classical prophet inequality instance with \textit{scaled Bernoulli random variables}. Before proving \Cref{thm:CR_bernoulli}, we use it to conclude the sought result for the $\vadm$ in our $\pida$ setting without delay.

\begin{corollary}\label{cor:AB_on_bounded_acceptance}
    Let $p \in [0,1]$ be fixed and $\Inst_0 = (Y_1,...,Y_n)$ an arbitrary $\pida$ instance without delay such that, for any $i \in [n]$, $\mathbb{P}(A_i = 1) \geq p$. Then, $\crB(\Inst_0) \geq \frac{1}{2-p}$.
\end{corollary}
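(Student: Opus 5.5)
Conditioning on the value realizations turns the $\vadm$'s problem into a classical prophet inequality instance over scaled Bernoulli variables, so that \Cref{thm:CR_bernoulli} applies realization by realization. The plan is to compare $\va(\Inst_0)$ and $\opt(\Inst_0)$ pointwise in $(x_1,\dots,x_n)$ and then integrate.

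Fix a realization $x=(x_1,\dots,x_n)$ of $(X_1,\dots,X_n)$. By definition, $\vadm_{\Inst_0}(x)=\dm((x_1,A_1),\dots,(x_n,A_n))$. This is a $\pida$ instance with no delay and deterministic values. Applying the Reduction Lemma (\Cref{lemma:reduction_lemma}) gives
\begin{align*}
\vadm_{\Inst_0}(x)=\classdp(x_1A_1,\dots,x_nA_n),
\end{align*}
and each $x_iA_i$ is exactly a scaled Bernoulli $\lambda_i\cdot\ber(p_i)$ with $\lambda_i=x_i$. By independence of values and acceptances, the conditional prophet value is
\begin{align*}
\esp{\max_i A_i X_i \mid X=x}=\opt(x_1A_1,\dots,x_nA_n).
\end{align*}

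Next I apply \Cref{thm:CR_bernoulli}. That theorem requires $p_i\in(0,1)$, so the boundary cases must be handled first.
\begin{itemize}
\item If some $p_i=0$, the vector $Y_i$ contributes nothing to any of the three agents, so it can be removed. Its removal can only increase $\min_i p_i$, and this case only arises when $p=0$, where the bound $1/2$ follows from \Cref{cor:cr_d=0}.
\item If some $p_i=1$, replace $p_i$ by $1-\delta$ and let $\delta\to0$. Both $\classdp$ and $\opt$ are continuous in the $p_i$ for a fixed finite instance, being finite polynomial or max expressions, so the inequality passes to the limit. Since $\min_i p_i\ge p$, the resulting ratio bound $1/(2-\min_i p_i)\ge 1/(2-p)$ holds in the limit as well. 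Alternatively, one can check that the proof of \Cref{thm:CR_bernoulli} tolerates $p_i=1$.
\end{itemize}
With this in place, for every $x$,
\begin{align*}
\vadm_{\Inst_0}(x)\ \ge\ \frac{1}{2-\min_i p_i}\,\opt(x_1A_1,\dots,x_nA_n)\ \ge\ \frac{1}{2-p}\,\esp{\max_i A_iX_i\mid X=x}.
\end{align*}

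Finally, take expectations over $X$. The left-hand side becomes $\va(\Inst_0)$, and by the tower property the right-hand side becomes $\frac{1}{2-p}\opt(\Inst_0)$, which gives $\crB(\Inst_0)\ge 1/(2-p)$. The key point is that the constant $1/(2-p)$ is uniform in $x$, since it depends only on the acceptance probabilities and not on the values. The only delicate step is the boundary handling of $p_i\in\{0,1\}$; the rest is a direct composition of the Reduction Lemma and \Cref{thm:CR_bernoulli}.
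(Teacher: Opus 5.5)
Your proof is correct and follows the paper's route: condition on the values, identify the $\vadm$ with the classical online decision-maker on the scaled Bernoulli variables $x_iA_i$, apply \Cref{thm:CR_bernoulli} realization by realization, and integrate. Beyond the paper's one-line argument, you use the Reduction Lemma to justify that identification and you handle the boundary cases $p_i\in\{0,1\}$, which the theorem's hypothesis $p_i\in(0,1)$ formally excludes.
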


The proof of \Cref{cor:AB_on_bounded_acceptance} is a direct consequence of \Cref{thm:CR_bernoulli} when observing that the $\vadm$ in a $\pida$ instance without delay and the decision-maker in a classical PI instance with scaled Bernoulli random variables coincide.

To finally prove \Cref{thm:CR_bernoulli}, we employ a similar approach to Ekbatani et al.~\cite{ekbatani2024prophet}, which studies prophet inequalities with the possibility of buying back previously selected values. Through a sequence of reductions, they show that it suffices to analyze instances with scaled Bernoulli random variables, enabling the problem to be solved via a linear program. We use a sequence of technical results, that we present in the following list:
\begin{itemize}[leftmargin = *, label=$\bullet$]
    \item Lemma \ref{lem:ordered setting} (originally proved in ~\cite{ekbatani2024prophet}) proves that, in the context of Theorem \ref{thm:CR_bernoulli}, the worst instances for $\classdp$ are those where the values $(\lambda_1,...,\lambda_n)$ are increasingly sorted.
    \item Lemma \ref{lem:ordered -> threshold} proves that, whenever the values $(\lambda_1,...,\lambda_n)$ are increasingly sorted, single-threshold algorithms are optimal.
    \item Lemma \ref{lem:const then inst} proves that in increasingly sorted instances, there exists $i \in [n]$ such that we can replace the first $i-1$ random variables by a deterministic one, obtaining a worst instance for $\classdp$. Moreover, in such instance, $\classdp$ is forced to pick the first value.
    \item Lemma \ref{lem:optim problem} leverages the previous results to embed the problem into a simple optimization problem whose solution corresponds to a $\cpi$ instance as in Theorem \ref{thm:CR_bernoulli}, where $\classdp$ obtains value $1$ and the value of the prophet is maximized.
    \item Lemmas \ref{lem:relaxation B_n} and \ref{lem:first is one} show that in the previous construction, the last random variable can be replaced by a random variable of mean $1$.
    \item Lemma \ref{lem:1 for all i} shows the instance can be reduced to only having two random variables, one deterministically equal to $1$ and one scaled Bernoulli with success probability $p$, where $\classdp$ obtains value $1$ and the prophet gets $2-p$.
\end{itemize}

\noindent The technical results as well as the proof of \Cref{thm:CR_bernoulli} are included in Appendix~\ref{sec:proof_CR_bernoulli}.

\section{Conclusions}

In this article, we introduce the \textit{prophet inequality with delayed and uncertain acceptance}, an extension of the classical prophet inequality in which selecting an option may fail with a positive probability after its realized value has been observed and a fixed, known delay has elapsed. This setting naturally gives rise to a new benchmark between the decision-maker and the prophet: the \textit{value-aware decision-maker} $\vadm$, who knows all value realizations in advance but not the acceptance outcomes. We derive tight worst-case competitive ratios for both the decision-maker and the VA-DM against the prophet, thereby generalizing the classical $1/2$-competitive guarantee of Krengel and Sucheston \cite{krengel1977semiamarts} as well as the $1/n$ guarantee of Assaf et al. \cite{assaf1998statistical}. Furthermore, in the no-delay setting, we identify sufficient conditions under which the $\vadm$ surpasses the $1/2$ barrier, thereby partially characterizing the instances in which advance knowledge of the realized values provides a genuine advantage over the standard online decision-maker.
\medskip

\noindent\textbf{Future work}. The results presented in this article open several promising directions for future research. First, our analysis assumes independent random vectors. Although all of our results are stated under the assumption that values and acceptance events are independent, Appendix \ref{sec:independence} shows how they extend to the more general setting in which each random vector $Y_i$ follows an arbitrary joint distribution. However, this extension is currently purely technical, as our analysis applies to any kind of correlation. In particular, considering specific correlation structures could lead to stronger guarantees for the online agents. An interesting and complementary direction is to consider \textit{correlated acceptances over time}, as rejections in real-life can give signals for future applications. However, a careful study is required as naive approaches can quickly converge to negative results (cf. \Cref{prop:corr_Ai} in Appendix \ref{sec:independence}).

Second, our model assumes a fixed delay, whereas in practice the waiting time often depends on the option under consideration. This naturally motivates the study of \textit{stochastic delays} with known distributions. In preliminary work included in Appendix~\ref{sec:random_delay}, we have shown that under stochastic delay, the decision-maker cannot achieve a competitive ratio against the prophet exceeding $1/\mathbb{E}[d+1]$. However, many fundamental questions in this setting remain open.

Finally, two theoretical questions remain unresolved: to determine the tight competitive ratio between the decision-maker and the VA-DM in the presence of positive delay and to find necessary conditions for the instances in the no-delay setting for which the VA-DM can surpass the $1/2$ barrier.

\bibliographystyle{splncs04}
\bibliography{biblio}

%





\appendix

\section{Correlated $\pida$}\label{sec:independence}

In this section, we introduce the correlated version of our setting where each random vector $Y_i$ is drawn from a joint distribution, and show that all the exposed results remain true.

While this feature enriches our model, we opted to present the independent case in the main article for the sake of simplicity. We begin by stating the definition of the values of the decision-maker and the $\vadm$. Notice that the prophet remains unchanged under this new model, so no extra definition is required.

\begin{definition}
    The optimal expected value of the \textbf{decision-maker} with delay $d\in \mathbb N$ on $\Inst$ is denoted $\dm(\Inst)$, for dynamic programming, and defined by the following recurrence:
    \begin{align*}
        \dm(Y_i,...,Y_n) := \esp{\max\left(\esp{A_i\cdot X_i + (1 - A_i)\dm(Y_{i+d+1}, \dots, \, Y_n) \mid X_i},\ \dm(Y_{i+1}, \dots, \, Y_n)\right)},
    \end{align*}
    and $\dm(Y_i, \dots, Y_n) := 0$ for $i > n$.
\end{definition}

\begin{definition}
    We denote $\va(\Inst_d)$ the optimal expected value of the $\vadm$ on the instance $\Inst_d$ with delay $d$.  Given $x_i$ the realization of $X_i$, for any $i \in [n]$, denote
    \begin{align*}
        \va\text{-}\dm_{(Y_1, \dots,Y_n)}(x_1, \dots, x_n) := \dm\left((x_1, A_1'), \dots, (x_n, A_n')\right),
    \end{align*}
    where $A_i'$ verifies $\prob{A'_i = 1} = \prob{A_i = 1 \mid X_i = x_i}$. With this in mind, it follows,
    \begin{align*}
        \va(\Inst) := \esp{\va\text{-}\dm_{(Y_1, \dots,Y_n)}(X_1, \dots, X_n)}.
    \end{align*}
\end{definition}

Given these new definitions, all the proofs in the Appendix will consider the more general setting.

\subsection{Correlated Acceptance}

Notice that mutual independence between the random vectors $\{Y_1,...,Y_n\}$ is required, as it is known that whenever values present correlation between them, the decision-maker cannot guarantee any positive competitive ratio against the prophet (see e.g. \cite{hill1981ratio}). Moreover, the following result shows that having correlated acceptances is enough to emulate the same negative result in our setting.



\begin{proposition}\label{prop:corr_Ai}
    There exists a $\pida$ instance without delay $\Inst_0$ with correlated acceptance such that $\crC(\Inst_0) \to 0$ as $n \to \infty$.
\end{proposition}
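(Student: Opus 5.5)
We will build an explicit instance without delay whose acceptances are correlated across time, mimicking the classical Hill--Kertz construction in which correlated values make the prophet arbitrarily better than any online algorithm. The idea is to make the values $X_i$ deterministic and increasing, so that all the randomness, and all the correlation, sits in the acceptance vector $(A_1,\dots,A_n)$. We then choose this vector so that there is a random index $N$ with $A_i = 1$ if and only if $i \le N$. Because $d=0$, the decision-maker may keep attempting until one attempt succeeds. Each rejection at time $i$ reveals that $N<i$, so all later acceptances are $0$ and the decision-maker earns nothing afterwards. Each success ends the game. Hence any policy amounts to choosing the first index $k$ at which to attempt, and its payoff is $X_k\,\prob{N\ge k}$.

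\textbf{Choice of the instance.} We take $X_i = 2^i$ and $\prob{N \ge i} = 2^{-i}$ for $i\in[n]$, so that $\prob{N=i}=2^{-i-1}$, with the remaining mass $2^{-n}$ placed on $N=n$ and on $N=0$ (all rejected) so that the probabilities sum to one. The prophet then receives $\max_i A_iX_i = X_N = 2^N$. This gives
$$\fk(\Inst_0)=\esp{2^N}\ \ge\ \sum_{i=1}^{n-1}2^{-i-1}2^{i} = \frac{n-1}{2}.$$

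\textbf{Decision-maker's value.} Under the correlated model the decision-maker conditions on the history, namely past rejections and skipped attempts. A skipped option reveals nothing about $N$, and a rejection reveals that $N$ is smaller than the current index, after which every continuation pays $0$. So by the observation above, the optimal policy is to attempt the first option at a chosen index $k$, and its payoff is $X_k\prob{N\ge k}=1$. Hence $\dm(\Inst_0)\le 1$.

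\textbf{Conclusion and checks.} Combining the two bounds gives $\crC(\Inst_0)\le 2/(n-1)\to0$. The main obstacle is to verify rigorously that the decision-maker cannot exploit partial information, i.e.\ to formalize the dynamic program under dependence across indices. The paper's recurrence only handles correlation within each $Y_i$, so I would define the decision-maker directly as an optimal stopping rule adapted to the filtration generated by the observed $X$'s and the revealed $A$'s. The formalization should be kept minimal, since the only observable event of interest is $\{N\ge k\}$.
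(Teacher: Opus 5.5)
Your proof is correct and takes essentially the paper's approach: deterministic values, with the acceptance vector equal to $(1,\dots,1,0,\dots,0)$ cut at a geometrically distributed rank $N$. This is the Hill--Kertz construction moved into the acceptances; your version is the case of ratio $1/2$ with values $2^i$. The only difference is that you verify $\dm(\Inst_0)\le 1$ and $\fk(\Inst_0)\ge (n-1)/2$ directly rather than citing Hill and Kertz. Your increasing values $X_i=2^i$ are what the argument needs: the paper's $X_i=p^{i-1}$ must be read as $p^{-(i-1)}$, since as written $A_1=1$ always and $X_1=\max_i X_i$, which gives ratio $1$.
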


\begin{proof}
    Let $p \in (0,1)$ be fixed. Define the instance $\Inst_0 = (Y_1,...,Y_n)$ where each $X_i = p^{i-1}$ is deterministic and
    $(A_1,...,A_n)$ verifies
    \begin{align*}
       &\prob{(A_1, \dots, A_n) = (1,1, \dots, 1, 0 \dots, 0)} = p^{j-1} - p^{j},\\
       &\prob{(A_1, \dots, A_n) = (1,1, \dots, 1)} = p^{n-1}.
    \end{align*}
    where in the first line $1$'s are up to rank $j$.
    As proved by Hill and Kertz~\cite{hill1981ratio}, the competitive ratio between the decision-maker and the prophet is of order $\mathcal{O}(\frac{1}{n(1-p)})$, in particular vanishing as $n\to \infty$.
    \qed
\end{proof}






\section{Stochastic Delay}\label{sec:random_delay}

Suppose a model with stochastic delay, that is, upon selecting an arriving random variable $X_i$, the delay $d$ is independently drawn from a known distribution $\mathcal{D}$. The following informal result gives an upper bound for the worst-case competitive ratio of the decision-maker against the prophet.

\begin{proposition}
Given $\varepsilon > 0$ and $n \in \mathbb{N}$, there exists an instance $\Inst_n(\varepsilon)$ with stochastic delay such that 
\begin{align*}
    \crC(\Inst_n(\varepsilon)) \leq \frac{1}{\mathbb{E}[d + 1]}.
\end{align*}
\end{proposition}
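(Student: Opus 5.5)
We will build the instance so that the decision-maker is essentially forced to waste its single useful attempt, and then analyze it exactly as in the tightness part of \Cref{thm:cr_deterministic_delay}, with the fixed delay $d$ replaced by the random delay drawn from $\mathcal D$.

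\textbf{The instance.} Fix $\mathcal D$ and $n$, and assume for the construction that $d\le n-2$ almost surely (otherwise truncate; the contribution of the tail can be made small by taking $n$ large). Let $\Inst_n(\varepsilon):=(Y_1,\dots,Y_n)$ with $Y_1=(1,1)$ deterministic and $Y_i=(1/\varepsilon,\ber(\varepsilon))$ for $i\ge 2$. Each later option thus has expected value $1$ if attempted.

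\textbf{Upper bound on the decision-maker.} We show by backward induction that $\dm$ is at most $1$ on this instance. Since all values are deterministic, any attempt at time $i\ge 2$ yields expected reward $\varepsilon\cdot\frac1\varepsilon=1$ on success. Failure leads to a continuation value that is at most $1$ by induction. More precisely, the value from position $i\ge2$ onward is at most the expected number of attempts times $1$, which is
\[
\sum_k \varepsilon(1-\varepsilon)^{k-1}\cdot\tfrac1\varepsilon\le 1 .
\]
Adding the deterministic first option, the decision-maker obtains at most $\max(1,\ \text{future})\le 1$. This is a routine check.

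\textbf{Prophet and random-delay adjustment.} The prophet ignores delays. Its value is
\[
\opt=\frac1\varepsilon\bigl(1-(1-\varepsilon)^{n-1}\bigr)+(1-\varepsilon)^{n-1},
\]
which tends to $n$ as $\varepsilon\to0$. To make the ratio exactly $1/\mathbb E[d+1]$ rather than $1/n$, set $n$ according to the delay. One option is to use a random number of long-shots, coupling the instance length with $\mathcal D$ so that after one attempt the delay consumes the rest. Since the instance must be fixed in advance, the cleaner route is to choose the instance so that the decision-maker's best first attempt is at time $1$, and the long-shots after it number effectively $d$. That gives prophet value $\approx \mathbb E[d+1]$ while $\dm\approx1$. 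Concretely, I would take $n$ large and make the long-shot at position $i$ have value $\frac{1}{\varepsilon}\,\mathbb P(d\ge i-1)$-weighted, for example by scaling the success probability to $\varepsilon\,\mathbb P(d\ge i-2)$. Then $\opt\to 1+\sum_{k\ge1}\mathbb P(d\ge k-1)=1+\mathbb E[d+1]-\ldots$, which needs careful indexing. The decision-maker's bound must still be checked: each attempt has expected reward at most $1$ and it cannot combine attempts profitably.

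\textbf{Main obstacle.} The hard step is this calibration: choosing weights so that the prophet gets $\sum_i \mathbb P(\text{slot } i \text{ blocked})\to\mathbb E[d+1]$ while the decision-maker's dynamic program still stays at $\le1$. I would first try equal long-shots with $n\to\infty$ and argue directly that the expected number of successful-probability mass the decision-maker can exploit is bounded by one attempt per $\mathbb E[d+1]$ slots. I would then rescale so that the ratio equals $1/\mathbb E[d+1]$.
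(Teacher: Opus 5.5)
\textbf{There is a genuine gap.} The step you call a routine check is false, and the rest of the argument depends on it. On $(1,1),(\tfrac1\varepsilon,\ber(\varepsilon)),\dots,(\tfrac1\varepsilon,\ber(\varepsilon))$ the decision-maker is not capped at $1$. It can skip $Y_1$ and attempt long-shot after long-shot, and every attempt it reaches contributes $\varepsilon\cdot\frac1\varepsilon=1$ in expectation. Your own sum is $\sum_{k=1}^{K}(1-\varepsilon)^{k-1}=\frac{1-(1-\varepsilon)^K}{\varepsilon}\approx K$, not at most $1$. The induction also does not close: a continuation bounded by $1$ only gives $1+(1-\varepsilon)$.

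In the tightness instance of \Cref{thm:cr_deterministic_delay} the value is $1$ only because there are exactly $d+2$ items, so one failed attempt blocks everything that remains. A sanity check exposes the error. For a deterministic delay $d$ and $n>d+2$, the bounds $\dm\le 1$ and $\opt\to n$ would give $\crC<1/(d+2)$, contradicting \Cref{thm:cr_deterministic_delay}. It would also have been \emph{stronger} than the claim, which is only an upper bound, so there was nothing to calibrate toward $1/\mathbb E[d+1]$.

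The weighted variant does not repair this, and it is not carried out. With weights like $\mathbb P(d\ge i-2)$, a decision-maker that fails at slot $2$ resumes at slot $3+d_1$. It still collects expected weight at least $\mathbb P(d_2\ge d_1+1)>0$ for any nondegenerate $\mathcal D$, and it may also start later than slot $2$. So you have no valid upper bound on $\dm$ at all. Truncating $\mathcal D$ also changes $\mathbb E[d+1]$.

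The missing idea is the one you only gesture at in your last paragraph, and it is the paper's route: drop the deterministic item and stop trying to hold $\dm$ near $1$. Take $n$ i.i.d. long-shots $(\tfrac1\varepsilon,\ber(\varepsilon))$. Then $\opt=\frac1\varepsilon\bigl(1-(1-\varepsilon)^n\bigr)\to n$. Meanwhile $\dm$ is at most, and as $\varepsilon\to0$ essentially equal to, the expected number of attempts that fit into $n$ slots when the $k$-th attempt consumes $d_k+1$ slots. Renewal theory makes this $\sim n/\mathbb E[d+1]$, so $\crC\to1/\mathbb E[d+1]$ as $\varepsilon\to0$ and then $n\to\infty$; the statement is informal and asymptotic in this sense.

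The paper's sketch counts only the greedy attempts. To bound $\dm$ for every policy, apply Wald's identity to the number of attempts $N$, noting that $\{N\ge k\}$ is independent of $d_k$. Since $\sum_{k<N}(d_k+1)\le n-1$, you get $\mathbb E[N]\,\mathbb E[d+1]\le n+\mathbb E[d_N]=n+o(n)$ whenever $d$ is integrable.
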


\noindent\textit{Sketch of Proof}.
Define \(\Inst_n(\varepsilon)\) as \(n\) i.i.d. items of the form \(\left(\frac{1}{\varepsilon}, \text{Ber}(\varepsilon)\right)\). Informally, as \(\varepsilon \to 0\), each variable that can be selected yields an additional value of \(1\). Thus,
\(
\lim_{\varepsilon \to 0}\opt(\Inst_n(\varepsilon)) = n.
\)
Furthermore, denoting \(d_k \sim \mathcal{D}\) for $k \in \mathbb N$, \(\dm\) is approximately the number of times the decision-maker can attempt to select an element, i.e., $$\dm(\Inst_n(\varepsilon)) \underset{\varepsilon \to 0}{\simeq} \esp{\min\left\{k \ \left/\ \sum\limits_{j=1}^k (d_j+1) > n \right.\right\}} - 1 {\underset{n\to \infty}{\sim}} \dfrac{n}{\esp{\mathcal D+1}}$$
with the last asymptotic equality coming from the renewal theory.\qed

\section{Missing Proofs}\label{sec:missing_proof}

\begin{lemma}
    \label{lem:minus_2_variables}
    In the setting of the proof of \Cref{thm:cr_deterministic_delay}, it holds
    $$\crC(\Inst) \geq \crC\left((\dm(\Inst),1),\, Y_1,\, \dots,\, Y_{n-d-2},\, \left(\frac{L'}{p'}, A_{p'}\right), \,\left(\frac{L}{p}, A_p\right), \,Y_{n-d}, \, \dots, \,Y_{n-2})\right).$$
\end{lemma}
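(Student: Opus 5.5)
We will show that the numerator of $\crC$ is unchanged and the denominator does not decrease when passing from the straight instance $\Inst=(Y_1,\dots,Y_n)$ to the modified instance, call it $\Inst''$.

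\emph{Decision-maker side.} First apply \Cref{lem:longshot_replacement} to $\Inst$. It replaces $Y_{n-d-1}$ by $(L/p,A_p)$ and deletes $Y_n$, giving a straight instance $\Inst^{(1)}$ of length $n-1$ with $\dm(\Inst^{(1)})=\dm(\Inst)$. In $\Inst^{(1)}$ the last vector is $Y_{n-1}$ and the vector $d+1$ positions before it is $Y_{n-d-2}$. Applying \Cref{lem:longshot_replacement} again, with $L'=\dm(Y_{n-d-2},Y_{n-1})$, replaces $Y_{n-d-2}$ by $(L'/p',A_{p'})$ and deletes $Y_{n-1}$, without changing the value. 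This step needs $n-1\ge d+2$, which gives the requirement $n\ge d+3$.

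Next we prepend the deterministic vector $(\dm(\Inst),1)$. Because the second instance is straight with value $\dm(\Inst)$, accepting the first vector and continuing both give $\dm(\Inst)$. Acceptance is certain, so the delay after a selection is irrelevant. Hence $\dm(\Inst'')=\dm(\Inst)$.

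\emph{Prophet side.} We must show
\[
\opt(\Inst)\le \opt(\Inst'')=\esp{\max\bigl(\dm(\Inst),\max_{i\le n-d-2}A_iX_i,\,A_{p'}L'/p',\,A_pL/p,\,\max_{n-d\le i\le n-2}A_iX_i\bigr)}.
\]
Write $Z_i=A_iX_i$. Prepending the constant only increases the prophet, so $\opt(\Inst)\le\esp{\max(\lambda,Z_1,\dots,Z_n)}$ with $\lambda=\dm(\Inst)$.

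Because the maximum is symmetric, we reorder so that $Z_{n-d-1},Z_n$ come last and apply \Cref{lem:Prophet_longshot_replacement}. Here $L=\classdp(Z_{n-d-1},Z_n)$ by \Cref{lemma:reduction_lemma}, so the lemma provides $p$ such that replacing $(Z_{n-d-1},Z_n)$ by $L_p=\frac{L}{p}\ber(p)$, distributed as $A_pL/p$, increases the expectation. We then repeat with $(Z_{n-d-2},Z_{n-1})$, which yields $p'$ and the long-shot $A_{p'}L'/p'$.

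Each application requires $\lambda\ge\esp{\text{last variable}}$. The last variable is $Z_n$ in the first step and $Z_{n-1}$ in the second. Since $\Inst$ is straight with $d>0$, the decision-maker can always secure $\dm(\Inst)\ge p_n\esp{X_n}=\esp{Z_n}$ by skipping to $Y_n$ and attempting it, and likewise $\dm(\Inst)\ge\esp{Z_{n-1}}$. In general $\dm(\Inst)\ge\esp{Z_j}$ for every $j$, which gives the hypothesis.

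There is one subtlety. The $p$ chosen in the prophet lemma must be the same $p$ used in \Cref{lem:longshot_replacement}. This is harmless because that lemma holds for every $p\in(0,1]$. So we first fix $p,p'$ from the prophet lemma and then apply the decision-maker lemma with those values.

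Combining the two sides, $\crC(\Inst)=\dm(\Inst)/\opt(\Inst)\ge \dm(\Inst'')/\opt(\Inst'')=\crC(\Inst'')$.

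\emph{Main obstacle.} The delicate step is bookkeeping the indices in the second use of \Cref{lem:longshot_replacement}. We must check that after the first replacement the pair $(Y_{n-d-2},Y_{n-1})$ really sits at distance $d+1$ in the shortened instance. We must also check that $L'$, although computed from the original vectors, agrees with what the lemma requires in the new instance. It does, since the lemma only uses the two replaced vectors considered without delay.
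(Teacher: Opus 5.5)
Your proposal is correct and follows the paper's proof. On the prophet side you add the constant $\dm(\Inst)$ and apply \Cref{lem:Prophet_longshot_replacement} twice, using the reduction lemma and the symmetry of the maximum. On the decision-maker side you apply \Cref{lem:longshot_replacement} twice, with the same $p,p'$.

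The differences from the paper are minor:
\begin{itemize}
\item You prepend $(\dm(\Inst),1)$ at the end rather than at the start.
\item You make explicit the hypothesis $\dm(\Inst)\ge \esp{A_jX_j}$ needed for \Cref{lem:Prophet_longshot_replacement}, and the requirement $n\ge d+3$. The paper leaves both implicit.
\end{itemize}

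There is one index slip, inherited from the statement itself. Since $Y_{n-d-2}$ is replaced by $(L'/p',A_{p'})$, the surviving original vectors at the front are $Y_1,\dots,Y_{n-d-3}$, as in the last line of the paper's prophet computation. Your displayed formula for $\opt(\Inst'')$ should therefore range over $i\le n-d-3$, which is exactly what your argument proves.
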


\begin{proof}
According to \Cref{lem:Prophet_longshot_replacement}, there exist $p$ and $p'$ such that:      
\begin{align*}
\opt(\Inst) & \leq \opt((\dm(\Inst), 1) ,\, Y_1,\, \dots,\, Y_n)\\ & = \esp{\max(\dm(\Inst),\, (A_iX_i)_{i\in[n]})} \\
& = \esp{\max(\dm(\Inst),\,(A_iX_i)_{i\in[n-1]\setminus\{n-d-1\}},\, A_{n-d-1}X_{n-d-1},\, A_nX_n )} \\
\text{\scriptsize \Cref{lem:Prophet_longshot_replacement}} \to  & \leq \esp{\max\left(\dm(\Inst),\,(A_iX_i)_{i\in[n-1]\setminus\{n-d-1\}},\,A_p \frac{L}{p}\right)}\\
& = \esp{\max\left(\dm(\Inst),\,(A_iX_i)_{i\in[n-2]\setminus\{n-d-1,\, n-d-2\}},\,A_p \frac{L}{p},\, A_{n-d-2}X_{n-d-2},\, A_{n-1}X_{n-1}\right)}\\
\text{\scriptsize \Cref{lem:Prophet_longshot_replacement}} \to  & \leq \esp{\max\left(\dm(\Inst),\,(A_iX_i)_{i\in[n-2]\setminus\{n-d-1,\, n-d-2\}},\,A_p \frac{L}{p},\, A_{p'} \frac{L'}{p'}\right)} \\
& = \opt\left( (\dm(\Inst),1),\, Y_1,\, \dots,\, Y_{n-d-3},\,  \left(\frac{L'}{p'},\, A_{p'}\right),\, \left(\frac{L}{p},\, A_p\right),\, Y_{n-d},\, \dots,\, Y_{n-2} \right).
\end{align*}
Note that the modified variables are not the same one since $d \geq 1$. Regarding the decision-maker, thanks to Lemma \ref{lem:longshot_replacement} applied twice in a row, \begin{align*}
\dm(\Inst) & = \dm((\dm(\Inst),1),\, Y_1,\, \dots,\, Y_n) \\
& = \dm\left((\dm(\Inst),1),\, Y_1,\, \dots,\, Y_{n-d-2},\, \left(\frac{L}{p}, A_p\right), \,Y_{n-d}, \, \dots, \,Y_{n-1}\right)\\
& = \dm\left((\dm(\Inst),1),\, Y_1,\, \dots,\, Y_{n-d-2},\, \left(\frac{L'}{p'}, A_{p'}\right), \,\left(\frac{L}{p}, A_p\right), \,Y_{n-d}, \, \dots, \,Y_{n-2}\right),
\end{align*}
which concludes the proof.
\qed
\end{proof}

\begin{proof}[\Cref{lemma:reduction_lemma}]
In all the proof we assume all instance are with delay $d=0$ and we are doing the proof in the broader case where values and acceptance may be correlated.  We proceed by backward induction. By definition, $\dm(Y_n) = \esp{A_n\cdot X_n} = \classdp(Z_n)$. Let $i\in[n-1]$. Then let $x \in \mathbb R^+$ a realization of $X_i$, and set
\begin{align*}
    & \dm_{i+1}  := \dm(Y_{i+1}, \dots, \, Y_n) \\
    & M(x) := \max\bigl(\esp{A_i\cdot x + (1-A_i)\cdot \dm_{i+1} \mid X_i = x}\ ;\ \dm_{i+1}\bigr).
\end{align*}

\noindent Noting for $\lambda\in \mathbb R$, $\lambda^+ = \max(0, \lambda)$, we have \begin{align*}
    M(x) & = \max\big(\esp{A_i(x-\dm_{i+1}) \mid X_i = x} + \dm_{i+1}, \ \dm_{i+1}\big)\\
    & = \dm_{i+1} + \esp{A_i(x-\dm_{i+1}) \mid X_i = x}^+ \\
    & = \dm_{i+1} + \left((x-\dm_{i+1})\cdot \prob{A_i = 1\mid X_i = x}\right)^+ \\
    & = \dm_{i+1} + \prob{A_i = 1\mid X_i = x}\cdot(x-\dm_{i+1})^+ \\
    & = \esp{\dm_{i+1} + A_i(X_i-\dm_{i+1})^+ \mid X_i = x} \\
    & = \esp{\max(A_i\cdot X_i + (1-A_i)\cdot\dm_{i+1}\ ;\ \dm_{i+1})\mid X_i = x}
\end{align*}

\noindent With this in mind, it follows,
\begin{align*}
    \dm(Y_i, \dots, Y_n) & = \esp{M(X_i)} \\
    & =\esp{\esp{\max(A_i\cdot X_i + (1-A_i)\cdot\dm_{i+1}\ ;\ \dm_{i+1})\mid X_i}}\\
    & = \esp{\max(A_i\cdot X_i + (1-A_i)\cdot\dm_{i+1}\ ;\ \dm_{i+1})} \\
    & = \esp{\max(A_i\cdot X_i\ ;\ \dm_{i+1})}\\
    & = \esp{\max(A_i\cdot X_i\ ;\ \classdp(Z_{i+1}, \dots, \, Z_{n}))}\\
    & = \classdp(Z_i, \dots, Z_n),
\end{align*}

where the first equality is the definition of $\dm$, the second equality corresponds to the equivalence of $M$ previously proved, the third equality comes from the law of total expectation, the fourth equality holds as for each value of $A_i \in \{0,1\}$, the two lines become equivalent, the fifth equality corresponds to the induction hypothesis, and the last equality comes from the definition of $\classdp(Z_i,...,Z_n)$. \qed

\end{proof}

\begin{proof}[\Cref{prop:end_of_hopes}]
    Fix $\varepsilon > 0$ and $p,p_L \in (0,1)$. Consider the instance $\Inst_0 := (Y_1,...,Y_n, Y_L)$ such that, for $i \in [n]$, $X_i$ is a deterministic variable equal to $1$ and $A_i = \ber(p)$; and $X_L = \frac{1}{\varepsilon p_L}\ber(\varepsilon)$, $A_L = \ber(P_L)$. It follows that $\dm(\Inst_0) = 1$, $\va(\Inst_0) = 1 + (1-\varepsilon)(1-(1-p)^n)$, and $\fk(\Inst_0) = 1 + (1-\varepsilon p_L)(1-(1-p)^n)$.
    In particular, taking $n\to\infty$ and $\varepsilon \to 0$, we get, $\alpha(\Inst_0) \to {1}/{2}$ and $\beta(\Inst_0) \to 1$.\qed
\end{proof}

\subsection{Proof of \Cref{thm:CR_bernoulli}}\label{sec:proof_CR_bernoulli}

In the following, let $\lambda_1,...,\lambda_n \in \mathbb{R}_+$, $p_1,...,p_n \in [0,1]$, $X_i = \lambda_i \cdot \ber(p_i)$ to be a scaled Bernoulli random variable, for $i \in [n]$, and $\Inst = (X_1,...,X_n)$ the $\cpi$ instance defined by these random variables.

\begin{lemma}\label{lem:ordered setting}
Let $\sigma \in \Sigma([n])$ be a permutation of $[n]$ such that $\lambda_{\sigma(1)} \leq \dots \leq \lambda_{\sigma(n)}$. Denote $\Inst_\sigma := (X_{\sigma(1)}, \dots, X_{\sigma(n)})$, that is, the instance where the random variables are increasingly sorted by their values $\lambda_i$. It follows,
\begin{align*}
    \frac{\classdp(\Inst)}{\opt(\Inst)} \geq \frac{\classdp(\Inst')}{\opt(\Inst')}.
\end{align*}
\end{lemma}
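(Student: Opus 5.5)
The plan is to show that the numerator can only decrease when the variables are sorted, while the denominator does not change. (The $\Inst'$ in the statement is the sorted instance $\Inst_\sigma$.) The prophet's value $\opt(\Inst)=\esp{\max_i X_i}$ does not depend on the order of the variables, so $\opt(\Inst)=\opt(\Inst_\sigma)$. It therefore suffices to prove $\classdp(\Inst)\geq\classdp(\Inst_\sigma)$. Any permutation can be reached by a bubble sort, i.e.\ a sequence of adjacent transpositions each of which places a smaller $\lambda$ before a larger one. So I reduce to a single swap: if $\lambda_i\geq\lambda_{i+1}$, then exchanging $X_i$ and $X_{i+1}$ does not increase $\classdp$.

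\textbf{Step 1: a single swap with the future fixed.} Let $V:=\classdp(X_{i+2},\dots,X_n)$; it is unaffected by the swap. Write $a$ for the variable with the larger value $\lambda_a$ and $b$ for the other, so $\lambda_a\geq\lambda_b$. By the dynamic programming recursion, with $b$ first the value of positions $i,i+1$ followed by continuation $V$ is
\[
W_{ba}=\max\bigl(u,\;p_b\lambda_b+(1-p_b)u\bigr),\qquad u:=V+p_a(\lambda_a-V)^+ .
\]
The value $W_{ab}$ with $a$ first is symmetric. I need $W_{ab}\geq W_{ba}$, and I split into two cases according to the optimal choice at $b$ in the order $(b,a)$.

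\emph{Case (i): rejecting $b$ is optimal.} Then $W_{ba}=u$. In the order $(a,b)$ the policy that ignores $b$ already obtains $u$, so $W_{ab}\geq u$.

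\emph{Case (ii): accepting $b$ is optimal.} Then $\lambda_b\geq u\geq V$, and hence $\lambda_a\geq\lambda_b\geq V$, so $u=p_a\lambda_a+(1-p_a)V$. In the order $(a,b)$, consider the policy that accepts both variables whenever they are nonzero. Its value is $p_a\lambda_a+(1-p_a)\bigl(p_b\lambda_b+(1-p_b)V\bigr)$. Subtracting $W_{ba}=p_b\lambda_b+(1-p_b)\bigl(p_a\lambda_a+(1-p_a)V\bigr)$ leaves exactly $p_ap_b(\lambda_a-\lambda_b)\geq 0$. Hence $W_{ab}\geq W_{ba}$ in both cases.

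\textbf{Step 2: propagate backwards.} The swap leaves the suffix from $i+2$ on unchanged and weakly increases the value from position $i$ on. The map $c\mapsto\esp{\max(X_j,c)}$ is nondecreasing in $c$. A backward induction over positions $i-1,\dots,1$ therefore shows that $\classdp$ of the whole instance weakly increases when the larger-$\lambda$ variable is moved first. Running the bubble sort in reverse, from the sorted instance back to $\Inst$, gives $\classdp(\Inst)\geq\classdp(\Inst_\sigma)$. Dividing by the common value $\opt$ yields the claim.

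The main obstacle is Step 1. One must check that the case split covers every situation, including the degenerate cases $\lambda_b<V$ and $\lambda_a<V$, where the corresponding $(\cdot)^+$ terms vanish; both of these fall under Case (i). The rest of the argument is mechanical monotonicity.
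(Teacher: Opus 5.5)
Your proof is correct and follows essentially the same route as the paper: you use the order-invariance of $\opt$, reduce to adjacent transpositions, and run a two-case comparison that produces the same gap $p_ap_b(\lambda_a-\lambda_b)\geq 0$, then propagate backwards using the monotonicity of $c\mapsto\esp{\max(X_j,c)}$. Your split on the optimal action at the smaller-$\lambda$ variable, bounding the other order from below with explicit policies, is a slightly tidier version of the paper's nested case analysis, and you are right to read $\Inst'$ as the sorted instance $\Inst_\sigma$.
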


Ekbatani et~al.~\cite{ekbatani2024prophet} showed a similar result in their setting to Lemma \ref{lem:ordered setting}. For the sake of completeness, we include the proof of our result.

\begin{proof}
Consider $\Inst$ as in stated, and suppose that $\lambda_i > \lambda_{i+1}$ for some $i < n$. We will prove that by exchanging the order of $X_i$ and $X_{i+1}$, the value of $\classdp$ decreases. We proceed by cases.
\smallskip

\noindent\textbf{Case 1.} Suppose $\lambda_{i+1} \leq \classdp(X_{i+2}, \dots, X_n)$. It follows,
\begin{align*}
\classdp(X_{i}, X_{i+1}, \, \dots, \,X_n) 
&= \esp{\max(X_i, \esp{\max(X_{i+1}, \classdp(X_{i+2}, \dots, X_n))})} \\
&= \esp{\max(X_i, \classdp(X_{i+2}, \dots, X_n))} \\
&= \esp{\max(X_{i+1}, \esp{\max(X_{i}, \classdp(X_{i+2}, \dots, X_n))})} \\
&= \classdp(X_{i+1}, X_{i}, X_{i+2}, \dots, X_n).
\end{align*}

\noindent\textbf{Case 2.} Suppose $\lambda_{i+1} > \classdp(X_{i+2}, \dots, X_n)$. It follows, 
\begin{align*}
\classdp(X_{i}, X_{i+1}, \, \dots, \,X_n) = p_{i}\lambda_i + (1-p_i)p_{i+1}\lambda_{i+1} + (1-p_i)(1-p_{i+1})\classdp(X_{i+2}, \dots, X_n),
\end{align*}
while 
\begin{align}\label{eq:swapped_instance}
    \classdp(X_{i+1}, X_{i}, X_{i+2}, \dots, X_n) = p_{i+1}\lambda_{i+1}+ (1-p_{i+1})p_{i}\lambda_{i} + (1-p_{i+1})(1-p_{i})\classdp(X_{i+2}, \dots, X_n).
\end{align}
Suppose additionally that $\lambda_{i+1} \leq \classdp(X_i, X_{i+2}, \dots, X_n)$. Equation ($\ref{eq:swapped_instance}$) becomes,
\begin{align*}
    \classdp(X_{i+1}, X_{i}, X_{i+2}, \dots, X_n) = p_{i}\lambda_i + (1-p_i)\classdp(X_{i+2}, \dots, X_n).
\end{align*}
In such case, 
\begin{align*}
\classdp(X_{i}, X_{i+1}, \, \dots, \,X_n)  -\classdp(X_{i+1}, X_{i}, X_{i+2}, \dots, X_n) \geq 0,
\end{align*} 
as recall $\lambda_{i+1} > \classdp(X_{i+2}, \dots, X_n)$ since we are in case 2. Suppose finally that $$\lambda_{i+1} > \classdp(X_i, X_{i+2}, \dots, X_n),$$ in which case Equation ($\ref{eq:swapped_instance}$) becomes,
\begin{align*}
      \classdp(X_{i+1}, X_{i}, X_{i+2}, \dots, X_n) = p_{i+1}\lambda_{i+1}+ (1-p_{i+1})p_{i}\lambda_{i} + (1-p_{i+1})(1-p_{i})\classdp(X_{i+2}, \dots, X_n).
\end{align*}
We conclude then,
\begin{align*}
\classdp(X_{i}, X_{i+1}, \, \dots, \,X_n) - \classdp(X_{i+1}, X_{i}, X_{i+2}, \dots, X_n) = p_i p_{i+1} (\lambda_i - \lambda_{i+1}) \geq 0.
\end{align*} 
By monotonicity of the expectation and the maximum, by induction, we deduce that $$\classdp(X_1, \dots, X_n) \geq \classdp(X_1, \dots, X_{i-1}, X_{i+1}, X_i, X_{i+2}, \dots, X_n).$$ Since the value of the prophet is agnostic to the order, we conclude the proof. \qed
\end{proof}

From Lemma \ref{lem:ordered setting} we can suppose, without loss of generality, that the instance $\Inst$ is increasingly sorted. We call such instances \textit{ordered instances}.

\begin{lemma}\label{lem:ordered -> threshold}
Let $\Inst = (X_1,...,X_n)$, where each $X_i = \lambda_i\cdot\ber(p_i)$, be an ordered instance, that is, $\lambda_1 \leq ... \leq \lambda_n$. There exists a value $\tau \in \mathbb{R}$ such that the single-threshold algorithm with threshold $\tau$ is optimal in $\Inst$.
\end{lemma}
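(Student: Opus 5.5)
The plan is to analyze the optimal dynamic program directly and show that its stopping rule reduces to a single threshold. Write $V_i := \classdp(X_i,\dots,X_n)$ for the optimal continuation value from time $i$, with $V_{n+1}:=0$. At time $i$ the optimal online algorithm accepts $X_i$ exactly when $X_i > V_{i+1}$; ties can be resolved either way without changing the value. Since $X_i\in\{0,\lambda_i\}$ and $V_{i+1}\ge 0$, a realization $0$ never needs to be accepted. The rule is therefore to accept at time $i$ if and only if $X_i=\lambda_i$ and $\lambda_i > V_{i+1}$.

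The key observation is that $(V_i)$ is non-increasing in $i$, because $V_i=\esp{\max(X_i,V_{i+1})}\ge V_{i+1}$. Meanwhile $(\lambda_i)$ is non-decreasing by the ordering assumption. Consider the set of \emph{active} indices $S:=\{i\in[n] : \lambda_i > V_{i+1}\}$. If $i\in S$ and $j>i$, then
$$\lambda_j\ \ge\ \lambda_i\ >\ V_{i+1}\ \ge\ V_{j+1},$$
so $j\in S$. Hence $S=\{k,k+1,\dots,n\}$ for some $k$, or $S$ is empty. The optimal policy thus ignores every variable before $k$ and accepts the first nonzero realization from $k$ onward.

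It remains to exhibit a threshold that reproduces this policy. I would take $\tau$ with $\lambda_{k-1}<\tau\le\lambda_k$, for instance $\tau:=\lambda_k$ with acceptance when $X_i\ge\tau$. If the inequality $\lambda_{k-1}\le\lambda_k$ is not strict, I would instead take $\tau=V_k$, or argue that index $k-1$ can be moved into the active set. For $i<k$, the variable $X_i$ is either $0$ or $\lambda_i\le\tau$, and neither is accepted. For $i\ge k$, a realization $\lambda_i\ge\lambda_k\ge\tau$ is accepted, while $0$ is not, provided $\tau>0$. If $S$ is empty, take $\tau=+\infty$ (or $\max_i\lambda_i+1$), so the algorithm never selects, which is optimal since it matches the DP value $V_1=0$ contribution.

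The main obstacle is handling ties. When $\lambda_{k-1}=\lambda_k$, or when $\lambda_i=V_{i+1}$ exactly, the set of indices "strictly above the threshold" may not coincide with $S$. I would resolve this by noting that for an index with $\lambda_i=V_{i+1}$, accepting and rejecting give the same expected value. I would then redefine $S$ as $\{i:\lambda_i\ge V_{i+1}\}$, excluding the degenerate case $\lambda_i=0$. The monotonicity argument goes through with weak inequalities, and setting $\tau=\lambda_k$ with acceptance when $X_i\ge\tau$ and $X_i>0$ then matches an optimal policy exactly.
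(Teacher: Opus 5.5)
Your proof is correct and takes the same route as the paper. The paper also uses the chain $\lambda_{i+1}\ge\lambda_i\ge \classdp(X_{i+1},\dots,X_n)\ge \classdp(X_{i+2},\dots,X_n)$ to show that once stopping is optimal it stays optimal, and then sets $\tau$ equal to $\lambda$ at the first such index. You handle ties more carefully than the paper does, but one step you state without checking is needed: with the weak-inequality active set, the tie $\lambda_{k-1}=\lambda_k$ cannot occur, because $\lambda_{k-1} < V_k = p_k\lambda_k+(1-p_k)V_{k+1}\le\lambda_k$, and this strict gap is exactly what makes $\tau=\lambda_k$ reproduce the optimal policy.
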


\begin{proof}
    Suppose that $\classdp$ selects the $i$-th random variable. Notice that this holds if and only if $\lambda_i \geq \classdp(X_{i+1},...,X_n)$, as otherwise $\classdp$ has an incentive to continue searching. In particular, since the instance is ordered and by the definition of $\classdp$, it follows,
    \begin{align*}
        \lambda_{i+1} \geq \lambda_i \geq \classdp(X_{i+1}, \dots, X_n) \geq \classdp(X_{i+2}, \dots, X_n).
    \end{align*}
    We observe therefore, that whenever $\classdp$ picks a random variable $X_i$, the algorithm has the incentive to choose any of the posterior random variables that is not zero, as remember, random variables in $\Inst$ are scaled Bernoulli variables. Consider $\tau := \lambda_i$, the value chosen by $\classdp$. Because of the way threshold algorithms pick a value, and since the instance is ordered, the algorithm presents the same exact behavior than $\classdp$, that is, reject everything until $i$, and then choose the first scaled Bernoulli variable whose realization is not zero. \qed
\end{proof}

Lemma \ref{lem:ordered -> threshold} states that threshold algorithms, for a well-chosen threshold, are optimal for ordered instances with scaled Bernoulli random variables. The following result shows that, given this \textit{well-chosen} threshold, we can modify the instance in order the further decrease the value obtained by the decision-maker.

\begin{lemma}\label{lem:const then inst}
    Let $\Inst$ be an ordered instance with scaled Bernoulli random variables, and $\tau$ an optimal threshold, that is, the $\tau$-single-threshold algorithm achieves the value $v:= \classdp(\Inst)$. Consider the instance $\Inst_v := (X_v, X_i,...,X_n)$, where $X_v$ is deterministically equal to $v$ and $i \in [n]$ is such that $\lambda_{i-1} < \tau \leq \lambda_i$. It follows,
    \begin{align*}
        \frac{\classdp(\Inst)}{\opt(\Inst)} \geq \frac{\classdp(\Inst_v)}{\opt(\Inst_v)}.
    \end{align*}
\end{lemma}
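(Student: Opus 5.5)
We show separately that the decision-maker's value does not change and the prophet's value does not decrease when passing from $\Inst$ to $\Inst_v$; since both ratios are nonnegative, the claimed inequality follows.

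\textbf{Decision-maker side.} By Lemma \ref{lem:ordered -> threshold}, the $\tau$-threshold algorithm is optimal on $\Inst$. Because $\lambda_{i-1} < \tau \le \lambda_i$ and the instance is ordered, this algorithm rejects $X_1,\dots,X_{i-1}$ unconditionally. It then accepts the first nonzero realization among $X_i,\dots,X_n$. Hence
\begin{align*}
v = \classdp(\Inst) = \classdp(X_i,\dots,X_n) = \sum_{j\ge i} \lambda_j p_j \prod_{i\le k<j}(1-p_k).
\end{align*}
In $\Inst_v$ the first variable is deterministically $v$, so
\begin{align*}
\classdp(\Inst_v) = \max\bigl(v, \classdp(X_i,\dots,X_n)\bigr) = v.
\end{align*}
The decision-maker's value is therefore unchanged, and it is optimal to pick the first value, as announced in the roadmap.

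\textbf{Prophet side.} We must show $\opt(\Inst) \le \opt(\Inst_v)$, that is,
\begin{align*}
\esp{\max(X_1,\dots,X_{i-1},M)} \le \esp{\max(v,M)}, \qquad M := \max(X_i,\dots,X_n).
\end{align*}
Write $W := \max(X_1,\dots,X_{i-1})$, which satisfies $W \le \lambda_{i-1}$ and is independent of $M$. Since $x\mapsto \max(x,M)$ is nondecreasing, it suffices to show $W \le v$ almost surely. For this it is enough to establish $\lambda_{i-1} \le v$.

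\textbf{Main obstacle: $\lambda_{i-1}\le v$.} The natural route is the optimality of rejecting $X_{i-1}$. The optimal (backward-induction) policy on $\Inst$ accepts $X_{i-1}$ exactly when $\lambda_{i-1} > \classdp(X_i,\dots,X_n) = v$. The threshold policy rejects $X_{i-1}$ and is optimal, so one wants to conclude $\lambda_{i-1} \le v$.

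A subtlety is that an optimal threshold might reject items where accepting was equally good, or $\tau$ might be chosen strictly above the DP cutoff. I would resolve this by taking $\tau$ as the threshold built in the proof of Lemma \ref{lem:ordered -> threshold}, namely $\tau=\lambda_{i^*}$ where $i^*$ is the first index the DP policy is willing to select. With that choice, every earlier index $j<i^*$ satisfies $\lambda_j < \classdp(X_{j+1},\dots,X_n)$. Applying this to $j=i-1$, and using that $\classdp(X_{i},\dots,X_n)$ equals the continuation value since the DP skips nothing from $i$ onward, gives $\lambda_{i-1} < v$.

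Alternatively, if $\lambda_{i-1} > v$, an exchange argument applies. Modify the policy to accept $X_{i-1}$ when it is nonzero and otherwise continue with the threshold rule. This yields value $p_{i-1}\lambda_{i-1}+(1-p_{i-1})v > v$ whenever $p_{i-1}>0$, contradicting optimality; the case $p_{i-1}=0$ is trivial since then $W$ does not depend on $X_{i-1}$.

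Either way $W\le v$, so $\opt(\Inst)\le\opt(\Inst_v)$. Combined with $\classdp(\Inst_v)=\classdp(\Inst)$, this gives the claimed inequality between the ratios.
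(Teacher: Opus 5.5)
Your proposal is correct and follows essentially the same route as the paper: the decision-maker's value is unchanged because the optimal threshold rejects $X_1,\dots,X_{i-1}$, so $\classdp(\Inst)=\classdp(X_i,\dots,X_n)=\classdp(\Inst_v)$, and the prophet's value can only increase. Your exchange argument for $\lambda_{i-1}\le v$ supplies exactly what the paper leaves unexplained in its equality $\esp{\max\{v,X_i,\dots,X_n\}}=\esp{\max\{v,X_1,\dots,X_n\}}$, and the case $p_{i-1}=0$ is handled by applying the same exchange to every $j<i$ with $p_j>0$ rather than relying on ordering alone.
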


\begin{proof}
    Since the optimal threshold $\tau$ verifies $\lambda_{i-1} < \tau \leq \lambda_i$, for any $j < i$, it follows $$\classdp(X_j,\dots, X_n) = \classdp(X_{j+1}, \dots, X_n),$$ that is, the optimal thing do to is to reject all first $i-1$ values. By induction, it follows $$\classdp(X_1, \dots, X_n) = \classdp(X_{i}, \dots, X_n).$$ In particular, considering the instance $\Inst_v$ as in the statement, it follows, $\classdp(\Inst) = \classdp(\Inst_v)$. Finally, noticing that 
    \begin{align*}
        \opt(\Inst_v) &= \esp{\max\{v,X_i,X_{i+1},...,X_n\}} \\
        &= \esp{\max\{ v,X_1,...,X_n\}}\\
        &\geq \esp{\max\{X_1,...,X_n\}}\\
        &= \opt(\Inst),
    \end{align*}
    we conclude the proof. \qed
\end{proof}

A classical concern in the design of optimal algorithms for prophet inequalities is to force the algorithm to keep looking for future options and not to take the first random variable that seems good. Lemma \ref{lem:const then inst} creates worst-case instances by attacking exactly this point by putting at the beginning of the instance a random variable with deterministic value equal to the future expected reward of $\classdp$. In particular, since the ratio between the expected values of $\classdp$ and $\opt$ is invariant to multiplication by scalar, without loss of generality, in the following we consider instances $\Inst = (X_1,X_2,...,X_n)$, where $X_1 = 1$ deterministically, and each $X_i = \lambda_i\cdot\ber(p_i)$, for $i \in \{2,...,n\}$, with $1 < \lambda_2 < ... < \lambda_n$. Moreover, we suppose that 
\begin{align}\label{eq:online_opt_choose_1}
    \classdp(X_2,...,X_n) \leq 1,
\end{align}
that is, optimal online algorithms always pick the first random variable. 

Our following result builds on the previous construction to lower bound the competitive ratio of $\classdp$ by the optimal value of an optimization problem over both, the success probabilities $(p_2,...,p_n)$ and the scaling constants $(\lambda_2,...,\lambda_n)$. In order to consider Equation (\ref{eq:online_opt_choose_1}) into the optimization, we introduce $F_i := F(X_i, \dots, X_n)$, for any $i \in [n]$, to be the expected value of the first positive value among $X_i, \dots, X_n$. Notice $F_i$ is a function of $(p_i,...,p_n)$ and $(\lambda_i,...,\lambda_n)$ only. In particular, Equation (\ref{eq:online_opt_choose_1}) holds if and only if $F_i \leq 1$, for any $i \in [n]$ as, by Lemma \ref{lem:ordered -> threshold}, $\classdp(X_2, \dots, X_n) \in \{F_1, \dots, F_n\}$.

\begin{lemma}\label{lem:optim problem}
    Consider the following optimization problem,
    \begin{align*}\label{eq:B_n}
    \begin{array}{cll}
        \max & \sum\limits_{i=1}^n\prod\limits_{j>i}(1-p_j)p_i\lambda_i \\
        s.t. & p_1 = 1, \lambda_1 = 1\\ 
        & p_i \in [p, 1], & 2 \leq i \leq n\\
        & \lambda_i \geq \lambda_{i-1}, & 2 \leq i \leq n\\
        & F_i \leq 1 & 1 \leq i \leq n\\
    \end{array}
    \tag{$\mathcal B_n$}
    \end{align*}    
    Then, for any instance $\Inst$, it always holds,
    \begin{align*}
        \frac{\classdp(\Inst)}{\opt(\Inst)} \geq \frac{1}{\mathrm{val}(\mathcal B_n)},
    \end{align*}
    where $\mathrm{val}(\mathcal{B}_n)$ is the optimal value of the optimization problem ($\mathcal{B}_n$).
\end{lemma}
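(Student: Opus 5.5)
Let $\Inst = (X_1,\dots,X_n)$ be a $\cpi$ instance of scaled Bernoulli variables with all success probabilities at least $p$. The plan is to chain the earlier reductions to reach a normalized instance, and then show that this instance is a feasible point of $(\mathcal B_n)$ whose objective equals its prophet value.

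First, by Lemma~\ref{lem:ordered setting} we may assume $\Inst$ is ordered, $\lambda_1\le\dots\le\lambda_n$. This can only decrease the ratio, and it does not change the $p_i$. By Lemma~\ref{lem:ordered -> threshold} there is an optimal threshold $\tau$, so Lemma~\ref{lem:const then inst} applies. It lets us replace $\Inst$ by $\Inst_v=(v,X_i,\dots,X_n)$ with $v=\classdp(\Inst)$, and again the ratio does not increase.

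Both $\classdp$ and $\opt$ are positively homogeneous, so we rescale to make $v=1$. After rescaling:
\begin{itemize}
\item the first variable is deterministically $1$, i.e. $p_1=1$ and $\lambda_1=1$;
\item the remaining $\lambda_j$ satisfy $\lambda_j\ge\tau/v$.
\end{itemize}
Variables with $\lambda_j< 1$ are never useful to the prophet once the constant $1$ is present, and $\classdp$ never picks them. We either discard them or note that $\tau\ge v$ can be assumed (the optimal threshold may be taken equal to the continuation value). Either way $\lambda_j\ge 1=\lambda_1$, so monotonicity holds. If an atom $\lambda_j$ has $p_j=1$ we keep it, since $p_j\in[p,1]$ is still allowed. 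Degenerate $\lambda_j=0$ variables are simply dropped.

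Next we check the constraints $F_i\le 1$. By construction $\classdp(\Inst_v)=v=1$, and picking the deterministic first variable is optimal. Hence for every $i\ge 2$, the strategy ``take the first positive value among $X_i,\dots,X_n$'' is a legitimate online strategy on the tail, so $F_i\le\classdp(X_2,\dots,X_n)\le\classdp(\Inst_v)=1$. Also $F_1=1$ trivially. This is the point that needs care: we must make sure $\classdp(X_2,\dots,X_n)\le 1$ genuinely follows. It does, because $\classdp(\Inst_v)=\max(1,\classdp(X_2,\dots))=1$. A padding argument can extend the instance to exactly $n$ variables by adding zero-probability-free copies, or we can just note that $\mathrm{val}(\mathcal B_m)$ is nondecreasing in $m$.

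Finally, on an ordered scaled-Bernoulli instance the maximum equals $\lambda_i$ exactly when $X_i$ is the last nonzero variable. Therefore
\[
\opt(\Inst_v)=\sum_{i}\prod_{j>i}(1-p_j)\,p_i\lambda_i,
\]
which is the objective of $(\mathcal B_n)$ at this feasible point, and so $\opt(\Inst_v)\le \mathrm{val}(\mathcal B_n)$. Combined with $\classdp(\Inst_v)=1$, this gives
\[
\frac{\classdp(\Inst)}{\opt(\Inst)}\ \ge\ \frac{\classdp(\Inst_v)}{\opt(\Inst_v)}\ \ge\ \frac{1}{\mathrm{val}(\mathcal B_n)}.
\]
The main obstacle is the bookkeeping in the normalization step: ensuring the $\lambda$'s are at least $1$ and nondecreasing, and that the number of variables matches $n$. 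We handle it by dropping useless variables and using monotonicity of $\mathrm{val}(\mathcal B_n)$ in $n$, padding with variables where $\lambda_j=\lambda_{j-1}$ and $p_j=p$ only when that keeps $F_i\le1$.
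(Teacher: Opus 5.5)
Your proposal is correct and follows essentially the paper's route. You chain the ordering, threshold and constant-prefix lemmas, rescale so the leading constant is $1$, and observe that the resulting instance is feasible for $(\mathcal B_n)$ with objective equal to its prophet value, while filling in checks the paper leaves implicit ($\lambda_j\ge 1$, and $F_i\le\classdp(X_2,\dots,X_n)\le 1$). One bookkeeping slip, which the paper also glosses over: when $i=1$ the instance $\Inst_v$ has $n+1$ variables, so monotonicity of $\mathrm{val}(\mathcal B_m)$ in $m$ does not help there; instead overwrite $X_1$ by the constant $v$ rather than prepending it, which keeps $\classdp=v$ and can only increase $\opt$ since $v\ge p_1\lambda_1$ and $\lambda_1\le\lambda_2$.
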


\begin{proof}
    Note the constraints of problem (\ref{eq:B_n}) define an ordered instance whose first random variable is deterministically equal to $1$ and such that $\classdp$ always picks it. In particular, in such instance, namely $\Inst_{\mathcal{B}_n}$, $\classdp(\Inst_{\mathcal{B}_n}) = 1$. Regarding the prophet, since the instance is ordered, it follows,
    \begin{align*}
        \esp{\max_{i \in[k]} X_i} = p_k\lambda_k + (1-p_k)\esp{\max_{i \in[k-1]}X_i}, \forall k \in [n],
    \end{align*}
    and therefore, by induction,
    \begin{align*}
        \opt(\Inst_{\mathcal{B}_n}) =  \esp{\max_{i \in[n]} X_i} = \sum\limits_{i=1}^n\prod\limits_{j>i}(1-p_j)p_i\lambda_i.
    \end{align*}
    In particular, we obtain $\opt(\Inst_{\mathcal{B}_n}) \leq \mathrm{val}(\mathcal{B}_n)$. We conclude by applying the previous lemmas, as for any instance $\Inst$, it follows 
    \begin{align*}
        \frac{\classdp(\Inst)}{\opt(\Inst)} \geq \frac{\classdp(\Inst_{\mathcal{B}_n})}{\opt(\Inst_{\mathcal{B}_n})} \geq \frac{1}{\mathrm{val}(\mathcal{B}_n)}.
    \end{align*}\qed
\end{proof}

Our last technical results consist on upper bounding $\mathrm{val}(\mathcal{B}_n)$ by relaxing its constraints and to show there always exist particular solutions to this last problem. First, notice the values $(F_i)_{i\in [n]}$ verify the following recurrence relation. 
\begin{align}
\begin{split}\label{rel:rec F}
    F_n & = p_n\lambda_n \\
    F_i & = p_i\lambda_i + (1-p_i)F_{i+1}, \quad 1 \leq i \leq n-1 \\
    F_i & = \sum_{j=i}^n \prod_{k = i}^{j-1} (1-p_k)p_j\lambda_j, \quad 1 \leq i \leq n
\end{split}
\end{align}

First, by relaxing some of the constraints in Problem \eqref{eq:B_n}, the following result holds.

\begin{lemma}\label{lem:relaxation B_n}
Consider the optimization problem     
\begin{align}
\tag{$\mathcal B_n'$}
\label{eq:B_n'}
\begin{array}{cll}
    \max & \sum\limits_{i=1}^n\prod\limits_{j>i}(1-p_j)p_i\lambda_i \\
    s.t. & p_1 = 1, \lambda_1 = 1 &\\
    & p_i \in [p, 1], & 2 \leq i \leq n\\
    & \lambda_i \geq 1, & 2 \leq i \leq n\\
    & F_i \leq 1, & 1\leq i \leq n 
\end{array}
\end{align}    
It follows that $val\eqref{eq:B_n} \leq val\eqref{eq:B_n'}$.    
\end{lemma}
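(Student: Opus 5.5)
The plan is to show that every feasible point of \eqref{eq:B_n} is also feasible for \eqref{eq:B_n'}, with the same objective value. Since both problems maximize the same function, containment of the feasible sets immediately gives $val\eqref{eq:B_n} \leq val\eqref{eq:B_n'}$.

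The two problems share the objective $\sum_{i=1}^n\prod_{j>i}(1-p_j)p_i\lambda_i$. They also share the constraints $p_1=1$, $\lambda_1=1$, $p_i\in[p,1]$ for $2\le i\le n$, and $F_i\le 1$ for $1\le i\le n$. The only difference is that the monotonicity constraints $\lambda_i\ge\lambda_{i-1}$ in \eqref{eq:B_n} are replaced by $\lambda_i\ge 1$ in \eqref{eq:B_n'}. Moreover, the quantities $F_i$ depend only on $(p_i,\dots,p_n)$ and $(\lambda_i,\dots,\lambda_n)$ through the recurrence \eqref{rel:rec F}. This dependence is identical in both problems, so the constraints $F_i\le1$ impose the same restriction on any given vector.

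Fix a feasible $(p_i,\lambda_i)_{i\in[n]}$ for \eqref{eq:B_n}. Chaining $\lambda_1=1\le\lambda_2\le\cdots\le\lambda_i$ gives $\lambda_i\ge1$ for every $i\ge2$, by a trivial induction. All remaining constraints are literally shared, so the point is feasible for \eqref{eq:B_n'}. Its objective value is unchanged, hence the maximum of \eqref{eq:B_n'} is at least that of \eqref{eq:B_n}.

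There is essentially no obstacle here. The one point to be careful about is interpretive: one should note that $F_i$ is defined by the explicit formula in \eqref{rel:rec F} for arbitrary (possibly unordered) $\lambda$'s. With that definition the constraint $F_i\le 1$ is meaningful in \eqref{eq:B_n'} as well, and the objective there is just the formal expression rather than the prophet's value.
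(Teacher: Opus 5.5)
Your proposal is correct and follows the paper's own reasoning. The paper justifies the lemma only by saying that $(\mathcal{B}_n')$ relaxes some constraints of $(\mathcal{B}_n)$; your argument makes this explicit: chaining $\lambda_i \ge \lambda_{i-1} \ge \dots \ge \lambda_1 = 1$ shows every feasible point of $(\mathcal{B}_n)$ is feasible for $(\mathcal{B}_n')$, and the objective and the $F_i$ constraints are unchanged, so the maximum can only grow.
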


Next, in the relaxed optimization problem, we can always fix the last $F_n$ to be equal to $1$.

\begin{lemma}\label{lem:first is one}
There exists an optimal solution of problem \eqref{eq:B_n'} such that $F_n = 1$.
\end{lemma}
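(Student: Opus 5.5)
The plan is to argue by compactness and a local exchange of mass toward the last variable. Throughout I assume $p>0$. Then every feasible point satisfies $p_i\lambda_i\le F_i\le 1$, so $\lambda_i\le 1/p$. Hence the feasible set of \eqref{eq:B_n'} is compact, and since the objective and all $F_i$ are continuous, the optimal set is nonempty and compact. Among all optimal solutions I select one maximizing $F_n=p_n\lambda_n$. The goal is to show that this solution has $F_n=1$, by assuming $F_n<1$ and producing another optimal solution with strictly larger $F_n$.

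\textbf{Simple case.} Suppose that every constraint $F_i\le 1$ with $i\le n$ is slack. By \eqref{rel:rec F}, each $F_i$ and the objective are nondecreasing in $\lambda_n$, and the objective has coefficient $p_n>0$ in $\lambda_n$. So I can raise $\lambda_n$ slightly while staying feasible. The objective weakly increases, so the new point is still optimal, and $F_n$ strictly increases, which contradicts the choice.

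\textbf{Main case.} Otherwise, let $k<n$ be the largest index with $F_k=1$, so that $F_{k+1},\dots,F_n<1$.

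If $p_k=1$, then $F_i$ for $i\le k$ does not depend on $\lambda_n$: the product $\prod_{m=i}^{n-1}(1-p_m)$ contains the factor $1-p_k=0$. The constraints for $i>k$ are slack, so raising $\lambda_n$ slightly again works.

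If $p_k<1$, then $1=F_k=p_k\lambda_k+(1-p_k)F_{k+1}$ together with $F_{k+1}<1$ forces $\lambda_k>1$. This leaves room to decrease $\lambda_k$. The exchange is to decrease $\lambda_k$ by $\delta$ and increase $\lambda_n$ by $\varepsilon$, with
\[
p_k\,\delta=\prod_{m=k}^{n-1}(1-p_m)\,p_n\,\varepsilon .
\]
By the closed form in \eqref{rel:rec F}, for every $i\le k$ the contributions of $\lambda_k$ and $\lambda_n$ to $F_i$ share the common factor $\prod_{m=i}^{k-1}(1-p_m)$. Therefore every $F_i$ with $i\le k$ is unchanged. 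The $F_i$ with $i>k$ increase only by $O(\varepsilon)$ and were slack, so they stay below $1$ for small $\varepsilon$. Also $\lambda_k\ge1$ is preserved for small $\delta$.

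The change in the objective is
\[
p_n\varepsilon-\delta\,p_k\prod_{j>k}(1-p_j)
= p_n\varepsilon\Bigl(1-\prod_{m=k}^{n-1}(1-p_m)\prod_{j=k+1}^{n}(1-p_j)\Bigr)\ge 0 .
\]
So the new point is again optimal and has strictly larger $F_n$, a contradiction. Hence the selected optimal solution satisfies $F_n=1$.

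\textbf{Main obstacle.} The step needing the most care is the exchange itself: checking that moving mass from $\lambda_k$ to $\lambda_n$ leaves every tight constraint $F_i$, $i\le k$, exactly invariant, which is the common-factor observation. The degenerate case $p_k=1$ also needs separate handling. If $p=0$ is allowed, compactness fails. In that case I would either restrict to instances with $p_i>0$, as in \Cref{thm:CR_bernoulli}, or run the same exchange on any near-optimal sequence of solutions.
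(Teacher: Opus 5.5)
Your proof is correct and is essentially the paper's own argument. It uses the same choice of $k$ as the last index with $F_k=1$, and the same split into $p_k=1$ (raise $\lambda_n$ alone) and $p_k<1$ (shift mass from $\lambda_k>1$ to $\lambda_n$ so that $F_k$, and hence every $F_i$ with $i\le k$, is unchanged), with the same objective computation. Your additions---compactness to guarantee an optimum exists, and choosing an optimum with maximal $F_n$ so that only a weak objective increase is needed---make explicit what the paper takes for granted; the paper instead gets a strict increase from $p>0$. Your ``simple case'' is vacuous, since $F_1=p_1\lambda_1=1$ always.
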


\begin{proof}
    Let $(\mathbf{p},\boldsymbol{\lambda}) = (p_1,...,p_n,\lambda_1,...,\lambda_n)$ be an optimal solution of problem \eqref{eq:B_n'}, and suppose $F_n < 1$. We claim we can increase $F_n$ keeping all other $F_i$ bounded by $1$, and to arrive to a contradiction by constructing a feasible solution with higher value. In order to do it, from the recurrence relation, we need to increase $\lambda_n$ while letting $\mathbf{p}$ untouched. Consider 
    $$k := \max\{i\in [n] \mid F_k = 1\}.$$
    It follows that $k \geq 1$, since $F_1 = 1$, and that for any $i > k$, $F_i < 1$. Suppose $p_k = 1$. In particular, $$F_k = p_k\lambda_k + (1-p_k)F_{k+1} = p_k\lambda_k.$$ Then, increasing $\lambda_n$ leaves $F_i$, for all $i \leq k$, unchanged, as none of them depend on $\lambda_n$. Since $F_i < 1$ for $i > k$, there exists a gap $\varepsilon$ such that by increasing $\lambda_n$ by $\varepsilon$, the constructed solution remains feasible. However, since $\lambda_n$ has increased, the constructed solution has a value higher than the optimal value, which is a contradiction.
    
    Suppose $p_k < 1$ and define, for $x \in \mathbb{R}$,
    \begin{align*}
        \lambda'_n(x) &:= \lambda_n + x, \  y(x) := - \frac{x}{p_k}\cdot \prod\limits_{j = k}^{n-1}(1 - p_j)p_n,\ \text{ and } \lambda'_k(x):= \lambda_k + y(x),
    \end{align*}
    and the corresponding versions of $F'_i(x)$ as taking $F_i$ when replacing $\lambda_n \gets \lambda'_n(x)$ and $\lambda_k \gets \lambda_k'(x)$. Additionally, let us define $\lambda_i(x) = \lambda_i$ for $i\in[n-1]\setminus\{k\}$. Notice that,
    \begin{align*}
    F_k'(x) & = p_k \lambda_k'(x) + (1-p_k)F_{k+1}'(x)\\
    & = p_k \lambda_k + p_k y(x) + (1-p_k)F_{k+1} + (1-p_k)\prod\limits_{j=k+1}^{n-1}(1-p_j)p_nx \\
    & = F_k - p_k\cdot \frac{x}{p_k}\cdot \prod\limits_{j = k}^{n-1}(1 - p_j)p_n + \prod\limits_{j=k}^{n-1}(1-p_j)p_nx \\
    & = F_k  = 1.
    \end{align*}
    In particular, we obtain that for any value of $x$, since $F_k'(x) = F_k$, in particular, $F_i'(x) = F_i \leq 1 $, for any $i < k$ as, by induction, $F_{k-1}'$ can be written as function of $F_k'$, $F_{k-2}'$ as function of $F_{k-1}'$, and so on. Finally, notice that, since $F_{k+1} < 1$, $p_k < 1$, and $p_k\lambda_k + (1-p_k)F_{k+1} = F_k = 1$, it holds $\lambda_k > 1$. In particular, since for any $i > k$, $F_i < 1$, we can choose $x > 0$ small enough such that for any $i > k$, $F_i'(x) \leq 1$ and $\lambda_k'(x) \geq 1$, defining therefore a feasible solution of \eqref{eq:B_n'}, whose value is given by
    \begin{align*}
    \sum\limits_{i=1}^n\prod\limits_{j>i}(1-p_j)p_i\lambda_i'(x) 
    & = \sum\limits_{i=1}^n\prod\limits_{j>i}(1-p_j)p_i\lambda_i + \prod\limits_{j>n}(1-p_j)p_nx + \prod\limits_{j>k}(1-p_j)p_k y(x) \\
    & = \sum\limits_{i=1}^n\prod\limits_{j>i}(1-p_j)p_i\lambda_i + p_nx - \prod_{j>k}(1-p_j)p_k \cdot \frac{x}{p_k}\cdot\prod\limits_{j = k}^{n-1}(1-p_j)p_n \\
            & = \sum\limits_{i=1}^n\prod\limits_{j>i}(1-p_j)p_i\lambda_i + xp_n \left( 1 - \prod_{j>k}(1-p_j) \prod\limits_{j = k}^{n-1}(1-p_j) \right)\\
            & > \sum\limits_{i=1}^n\prod\limits_{j>i}(1-p_j)p_i\lambda_i,
    \end{align*}
    where the last inequality holds as $k < n$, $p_i \geq p > 0$ for any $i \in [n]$, and $x > 0$. Since the constructed solution has higher value, we obtain a contradiction.\qed
\end{proof}

Finally, we prove we can take all $F_i$ to be equal to $1$ and, in particular, for all except the last random variable, unit scale factors $\lambda_i = 1$.

\begin{lemma}\label{lem:1 for all i}
There exists an optimal solution of problem \eqref{eq:B_n'} such that $F_i = 1$ for any $i\in[n]$. In particular, it holds as well that $\lambda_i = 1$ for any $i \in [n-1]$.
\end{lemma}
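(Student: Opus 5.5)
The plan is to start from the optimal solution given by Lemma~\ref{lem:first is one} and to show that the single equality $F_n = 1$, together with feasibility, already forces every other $F_i$ to equal $1$. No further perturbation argument should be needed. Fix an optimal solution $(\mathbf p, \boldsymbol\lambda)$ of \eqref{eq:B_n'} with $F_n = 1$. Throughout I use that $p_i \geq p > 0$ for $2 \leq i \leq n$. This is the standing assumption of \Cref{thm:CR_bernoulli}, where $\min_i p_i > 0$, and it was already used in Lemma~\ref{lem:first is one}.

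The key step is a downward induction on $i$ using the recurrence \eqref{rel:rec F}. Assume $F_{i+1} = 1$ for some $i \in \{2, \dots, n-1\}$. Then
\begin{align*}
F_i = p_i\lambda_i + (1-p_i)F_{i+1} = 1 + p_i(\lambda_i - 1).
\end{align*}
The constraint $\lambda_i \geq 1$ gives $F_i \geq 1$, and the constraint $F_i \leq 1$ gives $F_i \leq 1$, so $F_i = 1$. Since $p_i > 0$, the same display then yields $\lambda_i = 1$. Starting from $F_n = 1$ and iterating down to $i = 2$, I obtain $F_i = 1$ and $\lambda_i = 1$ for all $2 \leq i \leq n-1$.

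The index $i = 1$ is handled directly from the constraints. Since $p_1 = 1$ and $\lambda_1 = 1$, the recurrence gives $F_1 = p_1\lambda_1 = 1$ regardless of the rest of the instance, and $\lambda_1 = 1$ holds by definition. Together with the induction, this gives $F_i = 1$ for all $i \in [n]$ and $\lambda_i = 1$ for all $i \in [n-1]$. Because I only derived consequences of feasibility and never modified the solution, it remains the optimal solution we started from.

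I do not expect a real obstacle, since the whole difficulty was absorbed by Lemma~\ref{lem:first is one}. The one point that needs care is the positivity $p_i > 0$, which is what turns $F_i = 1$ into $\lambda_i = 1$. As a consequence, the optimal value reduces to the single expression $p_n\lambda_n + (1-p_n)\prod_{2\le j<n}(1-p_j)\cdots$ with $p_n\lambda_n = F_n = 1$, and this is what the subsequent computation leading to $2-p$ will exploit.
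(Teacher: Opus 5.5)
Your argument is correct and is the same as the paper's proof. Starting from the optimal solution with $F_n=1$ given by Lemma~\ref{lem:first is one}, a downward induction sandwiches $F_i = 1 + p_i(\lambda_i-1)$ between $1$ (from $\lambda_i\ge 1$) and $1$ (from the constraint $F_i\le 1$), and $p_i>0$ then gives $\lambda_i=1$. Your separate handling of $i=1$ and your remark that the solution is never modified are just cleaner bookkeeping of that argument.

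One small aside on your closing comment, which is outside the lemma itself. Once $\lambda_i=1$ for $i<n$, the maximum of the first $n-1$ variables is identically $1$ because $p_1=1$. The objective is therefore simply $p_n\lambda_n+(1-p_n)=2-p_n$, with no extra product factor.
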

\begin{proof}
By backward induction on $i$, we prove that we can take $F_i = 1$. Lemma \ref{lem:first is one} proved the initialization, i.e. we can take $F_n = 1$. Then, for $i\in[n-1]$, we take an optimal solution of \eqref{eq:B_n} with $F_j = 1$ for $j > i$. On the one hand, as optimal solutions are feasible, $F_i \leq 1 $. On the other hand, as $\lambda_i \geq 1$ and $F_{i+1} = 1$, we also have that $F_i \geq p_i + (1-p_i) = 1$. By induction, it follows that there exists an optimal solution with $F_i = 1$ for $i\in[n]$. In such optimal solution, for $i \in [n]$, as $F_i = F_{i+1} = 1$ , we have that 
$$\lambda_i = \frac{1 - (1-p_i)}{p_i} = 1,$$ 
by the definition of $F_i$ in Equation (\ref{rel:rec F}). \qed
\end{proof}

We are ready to prove Theorem \ref{thm:CR_bernoulli}.

\begin{proof}[Theorem \ref{thm:CR_bernoulli}]
    Based on all the previous results, there exists an optimal solution of \eqref{eq:B_n'} with $\lambda_i = 1$ for $i < n$ and $p_n\lambda_n = 1$. Since this solution is also a feasible solution of \eqref{eq:B_n}, by Lemma \ref{lem:relaxation B_n}, it is also an optimal solution of \eqref{eq:B_n}. Notice that, since the first value is deterministically $1$, both $\classdp$ and $\opt$ and thus the competitive ratio, remain unchanged if we add or remove variables with value $1$ for $i \geq 2$.
    In particular, we can consider an instance with only two variables, whose optimal value is the solution to the problem
    \begin{align*}
        \begin{array}{cc}
            \max & p_2\lambda_2 + (1-p_2) \\
             s.t. &  p_2 \in [p,1]\\
             & \lambda_2 \geq 1\\
             & p_2\lambda_2 \leq 1
        \end{array}
    \end{align*}
    that is, $2-p_2$. Since $\classdp$ obtains $1$, we conclude the proof of the main claim of the theorem. 
    
    Finally, the fact that the bound can be achieved by using a single-threshold algorithm is a consequence of Lemma \ref{lem:ordered -> threshold} and an alternative version of Lemma \ref{lem:ordered setting} for threshold algorithms. Indeed, consider an instance $\Inst$ as stated in Theorem \ref{thm:CR_bernoulli} such that for some $i \in [n]$, $\lambda_i < \lambda_{i+1}$. Notice only two cases matter: (1) The realization of both random variables $X_i$ and $X_{i+1}$ are positive and above the threshold, in which case swapping $X_i$ and $X_{i+1}$ decreases the payoff of the algorithm; and (2) in any other case, the algorithm obtains the same payoff in the original instance or on the swapped one. Thus, ordered instances are the worst instances for threshold algorithms as for $\classdp$, and therefore the sought result is a consequence of Lemma \ref{lem:ordered -> threshold}. \qed
\end{proof}

\end{document}